\newcommand{\ShowComment}{true}
\definecolor{ForestGreen}{rgb}{0.1333,0.5451,0.1333}
\definecolor{DarkRed}{rgb}{0.8,0,0}
\definecolor{Red}{rgb}{1,0,0}
\declaretheorem[numberwithin=section]{theorem}
\declaretheorem[numberlike=theorem]{lemma}
\declaretheorem[numberlike=theorem]{corollary}
\declaretheorem[numberlike=theorem]{definition}
\declaretheorem[numberlike=theorem,style=remark]{remark}
\newcommand{\rb}[2]{\raisebox{#1 mm}[0mm][0mm]{#2}}
\newcommand{\istrut}[2][0]{\rule[- #1 mm]{0mm}{#1 mm}\rule{0mm}{#2 mm}}
\newcommand{\ignore}[1]{}
\newcommand{\ceil}[1]{\left\lceil{#1}\right\rceil}
\newcommand{\floor}[1]{\left\lfloor{#1} \right\rfloor}
\newcommand{\poly}{\operatorname{poly}}
\newcommand{\E}{{\mathbb E\/}}
\newcommand{\lowerint}[1]{\lfloor#1\rfloor}
\global\long\def\poly{\mathrm{poly}}
\newcommand{\flow}{\operatorname{flow}}
\newcommand{\LG}{\operatorname{lg}}
\newcommand{\SM}{\operatorname{sm}}
\newcommand{\IN}{\operatorname{in}}
\newcommand{\OUT}{\operatorname{out}}
\newcommand{\econn}{\mathsf{e}\text{-}\mathsf{conn}}
\newcommand{\ecut}{\mathsf{e}\text{-}\mathsf{cut}}
\newcommand{\vconn}{\mathsf{v}\text{-}\mathsf{conn}}
\newcommand{\vcut}{\mathsf{v}\text{-}\mathsf{cut}}
\newcommand{\Rmnum}[1]{\expandafter\@slowromancap\romannumeral #1@}
\renewcommand{\mod}{\ \mathrm{mod}\ }
\newcommand{\abs}[1]{\left|#1\right|} % for abstract value
\newcommand{\algoF}{g}
	\newcommand{\seth}[1]{\textcolor{blue}{SP: #1}}
	\newcommand{\longhui}[1]{\textcolor{red}{LY: #1}}
	\newcommand{\thatchaphol}[1]{\textcolor{purple}{TS: #1}}
	\newcommand{\seth}[1]{}
	\newcommand{\thatchaphol}[1]{}
	\newcommand{\longhui}[1]{}
\title{Space Complexity of Vertex Connectivity Oracles\thanks{This work was supported by NSF grants CCF-1815316,  CCF-2221980, and CCF-2238138.  An 
extended abstract of this work appeared in the Proceedings of STOC 2022 as \emph{Optimal Vertex Connectivity Oracles}.}}
\date{}
\author{Seth Pettie\\ University of Michigan 
\and 
Thatchaphol Saranurak\\
University of Michigan
\and 
Longhui Yin\\
Tsinghua University}
\begin{document}
\maketitle
\pagenumbering{gobble}

\begin{abstract}
A \emph{$k$-vertex connectivity oracle} for an undirected graph $G$ 
is a data structure 
that, given $u,v\in V(G)$, reports $\min\{\kappa(u,v),k+1\}$,
where $\kappa(u,v)$ is the pairwise vertex connectivity 
between $u,v$.  There are three main measures of efficiency: construction time, query time, and space.
Prior work of Izsak and Nutov~\cite{IzsakN12}
produced a data structure of $O(kn\log n)$ words, 
which can even be encoded as a $O(k\log^3 n)$-bit labeling scheme,
that can answer vertex-connectivity 
queries in $O(k\log n)$ time.  
The construction time is polynomial, but unspecified.

In this paper we address the top three complexity measures.

\begin{description}
\item[Space.]
We prove that any $k$-vertex connectivity oracle requires $\Omega(kn)$ bits of space, 
for any $k\leq n$.
This proves that the Iszak-Nutov data structure 
is optimal up to polylogarithmic factors for every $k$, 
and that Nagamochi-Ibaraki~\cite{NagamochiI92} sparsifiers are optimal compression schemes
for $k$-vertex connectivity, up to a logarithmic factor.
In particular, whereas all edge-connectivities can be efficiently compressed (as a weighted 
$(n-1)$-edge Gomory-Hu tree), vertex connectivity admits no asymptotic compression: $\Omega(n^2)$ bits are necessary.
We design a variation on Izsak and Nutov's data structure that uses 
$O(\min\{kn\log n, m\log n\log k\})$ words of space.

\item[Query Time.]
We answer queries in $O(\log n)$ time,
improving on the $\Omega(k\log n)$ time bound of 
Izsak and Nutov~\cite{IzsakN12}. 
The main idea is to build instances of $\textsf{SetIntersection}$ data structures, 
with additional structure based on affine planes.
This structure allows for query time that is 
linear in the output size, which evades some
conditional lower bounds that are polynomial in the query set sizes~\cite{HenzingerKNS15,KopelowitzPP16}.

\item[Construction Time.]
Our data structure can be constructed in 
the time of $O(k^3\operatorname{polylog}(n))$ max-flow computations, namely $k^3m^{1+o(1)}$ time using the recent near-linear time flow algorithm of~\cite{ChenKLPGS22}.
The main technical contribution here is a fast algorithm to compute a $(1+\epsilon)$-approximate Gomory-Hu tree for \emph{element connectivity},
in the time of $O(\epsilon^{-1}\operatorname{polylog}(n))$ max-flow computations.  
Element connectivity is a notion that generalizes edge and vertex connectivity.
\end{description}
\end{abstract}

\pagebreak
\pagenumbering{arabic}

\section{Introduction}

Most measures of \emph{graph connectivity} can be 
computed in polynomial time, and much of the recent 
work in graph \emph{algorithms} aims at reducing
these complexities to some natural barrier, 
either near-linear~\cite{Karger00,KargerP09,ForsterNYSY20,GawrychowskiMW20,GawrychowskiMW21,MukhopadhyayN20,Saranurak21,BrandLLSS0W21},
max-flow time~\cite{LiP20,li2021vertex}, 
or matrix multiplication time~\cite{CheungLL13,AbboudGIKPTUW19}.
The recent $m^{1+o(1)}$ flow algorithm of~\cite{ChenKLPGS22} collapses the first two
classes, but it is still useful to distinguish between \emph{direct} algorithms 
and reductions to max-flow.

Graph connectivity has also been extensively studied from a \emph{structural} perspective, where the aim is to understand the structure of some or all minimum cuts.  
This genre includes
Gomory-Hu trees~\cite{GomoryH61}, the cactus representation~\cite{DinicKL76},
block-trees, SPQR-trees, and extensions to higher vertex connectivity~\cite{DiBattistaT96,KanevskyTBC91,cohen1993reinventing,PettieY21}, 
and many others~\cite{Benczur95b,BenczurG08,gusfield1993extracting,DinitzV94,DinitzV95,DinitzV00,DinitzN95,DinitzN99a,DinitzN99b,PicardQ80,GeorgiadisILP16,GeorgiadisILP18,FirmaniGILS16}.

In this paper we study the \emph{data structural} approach to understanding graph connectivity, which incorporates elements of the \emph{algorithmic} 
and \emph{structural} camps, but goes further in that we want to be able to efficiently \emph{query} the connectivity, e.g., either ask for the size of a min-cut or the min-cut itself.   

\medskip 

Suppose that we are given an undirected 
graph $G$ and wish to be able to accurately answer pairwise 
edge- and vertex-connectivities up to some threshold $k$.
Define $\lambda(u,v)$ and $\kappa(u,v)$
to be the maximum number of edge-disjoint and internally 
vertex-disjoint paths, resp., between $u$ and $v$. 
By Menger's theorem (or max-flow min-cut duality)
$\lambda(u,v)$ and $\kappa(u,v)$ are the minimum size of an \emph{edge cut} 
$C_0 \subseteq E$ and a \emph{mixed cut} $C_1 \subseteq E\cup (V-\{u,v\})$, 
respectively, that disconnect $u,v$ in $G-C_0$ and $G-C_1$.\footnote{If $\{u,v\}\not\in E(G)$, 
then there is always a cut $C_1$ consisting solely 
of vertices.  When $\{u,v\}\in E(G)$, there is always 
a cut $C_1$ consisting of $\{u,v\}$ and $\kappa(u,v)-1$ vertices.}
\begin{description}
\item[$\econn(u,v)$ : ] Return $\min\{\lambda(u,v),k+1\}$.
\item[$\ecut(u,v)$ : ] If $\lambda(u,v)\leq k$, 
return an edge-cut separating $u,v$ with size $\lambda(u,v)$.
\item[$\vconn(u,v)$ : ] Return $\min\{\kappa(u,v),k+1\}$.
\item[$\vcut(u,v)$ : ] If $\kappa(u,v)\leq k$, return a mixed cut
separating $u,v$ with size $\kappa(u,v)$.
\end{description}
The edge-connectivity problems are essentially solved.
A \emph{Gomory-Hu tree} $T$~\cite{GomoryH61} (aka \emph{cut equivalent tree})
is an edge-weighted tree such that
the bottleneck edge $e$ between any $u,v$ has weight $\lambda(u,v)$,
and the partition defined by $T-\{e\}$ corresponds to 
a $\lambda(u,v)$-size edge cut, which can be explicitly associated 
with $e$ if we wish to also answer $\ecut$ queries.
Bottleneck queries can be answered in $O(1)$ time
with $O(n)$ preprocessing.\footnote{Observe that since $G$ is unweighted, the edge weights
of $T$ (possible edge-connectivities) are integers in $\{0,1,\ldots,n-1\}$, 
which can be sorted in $O(n)$ time.  Once the
edge-weights are sorted, the bottleneck edge query problem is reducible in $O(n)$ time 
to the range-min query problem in an array; see~\cite[\S 3]{DemaineLW09}.  Range-min queries, in turn, 
can be answered in $O(1)$ time after $O(n)$ preprocessing; see Bender and Farach-Colton~\cite{BF-C00}.
This tradeoff cannot be attained in the comparison model, where $O(n)$ preprocessing
implies $\Omega(\alpha(n))$ query time~\cite{Pettie06}; cf.~\cite{DemaineLW09,Chaz87}.}
The time to compute the full Gomory-Hu tree in unweighted graphs is $m^{1+o(1)}$~\cite{AbboudK0PST22}, 
and when $k$ is small, there are faster algorithms~\cite{HariharanKP07} 
running in $\tilde{O}(m+nk^3)$-time.

Thus, $\econn$-queries can be answered 
in $O(1)$ time by a data 
structure occupying space $O(n)$.
It is a long-standing open problem
whether a similar result is possible for $\vconn$-queries.
In 1979 Schnorr~\cite{Schnorr79} proposed a cut-equivalent tree 
for \emph{roundtrip} flow ($\min\{f(u,v),f(v,u)\}$) 
in directed graphs, and in 
1990 Gusfield and Naor~\cite{GusfieldN90}
used Schnorr's result to build a Gomory-Hu-type tree
for minimum \emph{capacitated} vertex ``separations.''\footnote{Formally, 
they defined a $u$-$v$ \emph{separation} to be either $\{u\}$ or $\{v\}$ or some $u$-$v$ 
vertex cut $S\subseteq V-\{u,v\}$.  Note that this is not a meaningful concept
in unit vertex-capacity graphs as the minimum capacity separation always has value 1, irrespective of $G$.}
Benczur~\cite{Benczur95a} found errors in Schnorr~\cite{Schnorr79}
and Gusfield and Naor~\cite{GusfieldN90}, and proved more generally
that there is no \emph{cut}-equivalent tree for 
round-trip flow or vertex separations.
Benczur~\cite[pp. 505-506]{Benczur95a} suggested 
a way to get a \emph{flow}-equivalent tree for 
capacitated vertex connectivity using Cheng and Hu~\cite{ChengH91}, 
and the standard reduction from 
vertex connectivity to directed flow. 
This would yield a Gomory-Hu-type tree suitable
for answering $\vconn$ (but not $\vcut$) queries.  
This claim also turned out to be incorrect.
Hassin and Levin~\cite{hassin2007flow} gave an example of
a vertex-capacitated graph (with integer capacities in $[1,n^{O(1)}]$)
that has $\Omega(n^2)$ distinct pairwise vertex connectivities, which cannot
be captured by a Gomory-Hu-type tree representation.

When the underlying graph has \emph{unit capacity}, the counterexamples 
of~\cite{Benczur95a,hassin2007flow} do not rule out a representation
of vertex-connectivity using, say, 
$\tilde{O}(1)$ trees,
nor do they rule out some completely different $\tilde{O}(n)$-space structure
for answering $\vconn$-queries, independent of $k$.

\providecommand{\tabularnewline}{\\}

\begin{table}

\begin{tabular}{|p{.25\textwidth}|l|l|l|}
\hline 
\textsf{Citation} & \textsf{Space (Words)} \hfill \ \ \textsf{Labeling (Bits)}& \textsf{Query} & \textsf{Construction} \tabularnewline\hline\hline 
Block-tree \hfill $k=1$  & $O(n)$ & $O(1)$\istrut[2]{4} & $O(m)$ \tabularnewline\hline
SPQR~\cite{DiBattistaT96} \hfill $k=2$  & $O(n)$ & $O(1)$\istrut[2]{4} & $O(m)$ \tabularnewline
\hline 
\cite{KanevskyTBC91} \hfill $k=3$ & $O(n)$ & $O(1)$\istrut[2]{4} & $O(m+n\alpha(n))$ \tabularnewline
\hline 
\rb{-2.5}{\cite{katz2004labeling}} & $O(n2^{k})$ \hfill $O(2^k\log n)$ & $O(2^k)$\istrut[2]{4} & $\poly(n)$ \tabularnewline\cline{2-4}
        & \hfill $\Omega(k\log(n/k^3))$ (w.c.) & \emph{any}\istrut[2]{4} & \emph{any} \tabularnewline\hline 
\cite{Korman10} & $O(nk^{3})$ \hfill $O(k^3\log n)$ & $O(k\log k)$\istrut[2]{4} & $\poly(n)$ \tabularnewline
\hline 
\cite{hsu2009optimal} & $O(nk^2)$ \hfill $O(k^2\log n)$\istrut[2]{4} & $O(\log k)$ & $\poly(n)$ \tabularnewline\hline 
\cite{IzsakN12} & $O(nk\log n)$ \hfill $O(k\log^3 n)$ & 
$O(k\log n)$\istrut[2]{4} & $\poly(n)$ \tabularnewline\hline 
\cite{AbboudGIKPTUW19} & $O(n^{2})$ \hfill $O(n\log n)$ & $O(1)$\istrut[2]{4} & $O(nk)^{\omega}$ \tabularnewline\hline 
\cite{PettieY21,Nutov22} \hfill $k=\kappa(G)$ & $O(n\kappa(G))$ \hfill $O(\kappa(G)\log n)$ & $O(1)$\istrut[2]{4} & $\tilde{O}(m+n\poly(\kappa))$ \tabularnewline\hline
\cite{Nutov22} \hfill & $O(nk^2)$ & $O(1)$\istrut[2]{4} & $\poly(n)$ 
\tabularnewline\hline 
\rb{-5}{\textbf{New}}  & $O(\min\{kn\log n,$ \ \    \hfill \rb{-2.5}{$O(k\log^3 n)$} & \rb{-2.5}{$O(\log n)$}\istrut{4} & \rb{-2.5}{$O(k^3\log^7n)\cdot T_{\flow}(m)$} \\
        & \ \ \ \ $m\log n\log k\})$ &\istrut[2]{0}&\\\cline{2-4}
            & $\Omega(nk/\log n)$ \hfill $\Omega(k)$ (avg.) & \emph{any}\istrut[2]{4} & \emph{any}
% -- backup for the old table version--
% \rb{-5}{\textbf{New}}  & $O(\min\{kn\log n,$ \ \ \ \ \ \ \ \ \    \hfill \rb{-2.5}{$O(k\log^3 n)$} & \rb{-2.5}{$O(\log n)$}\istrut{4} & \rb{-2.5}{$\tilde{O}(k^3)\cdot T_{\flow}(m)$} \\
%         & \ \ \ \ \ \ \ \ \ $m\log n\log k\})$ &\istrut[2]{0}&\\\cline{2-4}
%             & $\Omega(nk/\log n)$ \hfill $\Omega(k)$ (avg.) & \emph{any}\istrut[2]{4} & \emph{any}
\tabularnewline\hline\hline
\end{tabular}
\caption{\small A history of vertex-connectivity oracles.  By convention, space for centralized data structures is measured in $O(\log n)$-bit words and space
for labeling schemes is measured in bits.
With a couple exceptions, all data structures and lower bounds are for
$\vconn(u,v) = \min\{\kappa(u,v),k+1\}$ queries, where $k$ is an arbitrary parameter.
The constructions based on block trees, SPQR trees~\cite{DiBattistaT96}, and~\cite{KanevskyTBC91} only work for $k\in\{1,2,3\}$, and the structures of
\cite{PettieY21,Nutov22} 
only work when $k=\kappa(G)$ is
the global minimum vertex connectivity of $G$.  
The $\Omega(k\log(n/k^3))$ lower bound of Katz et al.~\cite{katz2004labeling}
is for the worst-case (longest) label length; it implies nothing on average length.
The new $\Omega(nk)$-bit lower bound is for the total size of the data 
structure, and implies an $\Omega(k)$-bit lower bound on the average length 
of any labeling scheme.}\label{table:history}

\end{table}

Most prior data structures
supporting $\vconn$
queries are actually \emph{labeling schemes}.
A vertex 
\emph{labeling} $\ell:V\rightarrow \{0,1\}^*$ 
is created such that
a $\vconn(u,v)$ query is answered 
by inspecting only $\ell(u),\ell(v)$.  
Katz, Katz, Korman, and Peleg~\cite{katz2004labeling} 
initiated this line of research into labeling for connectivity.
They proved that the maximum label length
to answer $\vconn$ queries is 
$\Omega(k\log (n/k^3))$ and $O(2^k\log n)$.
To be specific, they give a class of graphs for which
a $\Theta(1/k^2)$-fraction of the vertices 
require $\Omega(k\log(n/k^3))$-bit labels. 
However, this does not imply any non-trivial 
bound on the average/total label length.
Their upper bound was subsequently improved 
to $O(k^3\log n)$~\cite{Korman10}
and then to $O(k^2\log n)$~\cite{hsu2009optimal}.\footnote{The labeling schemes of Katz et al.~\cite{katz2004labeling}, Korman~\cite{Korman10}, and Hsu and Lu~\cite{hsu2009optimal} differentiate between $\kappa(u,v)\geq k_0$
and $\kappa(u,v) < k_0$ using labels of size $O(2^{k_0}\log n)$, $O(k_0^2\log n)$,
and $O(k_0\log n)$, respectively.  They can be used to answer $\vconn$ queries 
by concatenating labels for all $k_0 = 1,2,\ldots,k$, thereby introducing an $O(k)$-factor overhead in~\cite{Korman10,hsu2009optimal}.}
Using a different approach, Izsak and Nutov~\cite{IzsakN12}
gave an $O(k\log^3 n)$-bit labeling scheme for $\vconn$ queries.  A centralized version of the data structure
takes $O(kn\log n)$ space,
and can be augmented to support 
$\vcut$ queries in $O(k^2n\log n)$ space.
(Following convention, the space of centralized data structures is measured in $O(\log n)$-bit \emph{words} 
unless specified otherwise, whereas labeling schemes are always measured in \emph{bits}.)

The labeling schemes~\cite{katz2004labeling,Korman10,hsu2009optimal,IzsakN12} 
focus on label-length, and do not discuss the construction time in detail, which is some large polynomial.

\subsection{New Results}

We resolve the long-standing question concerning the space complexity of representing vertex connectivity; 
see Hu~\cite{Hu69} and~\cite{Benczur95a,hassin2007flow}.  
We prove that any data structure answering 
$\vconn$ queries requires $\Omega(kn)$ bits, 
    and that the lower bound extends to threshold queries (decide whether or not $\kappa(u,v)\leq k$), equality queries (decide whether or not $\kappa(u,v)=k$), and approximate queries (distinguish $\kappa(u,v)<k-\Theta(\sqrt{k})$ from $\kappa(u,v)>k+\Theta(\sqrt{k})$).
    Our result has several notable consequences.
\begin{itemize}
\item Izsak and Nutov's~\cite{IzsakN12} centralized data structure is \emph{optimal} to 
    within a $\log^2 n$-factor, and even the \emph{average} 
    length of their labeling scheme cannot be improved by more than a $\log^3 n$-factor; cf.~\cite{katz2004labeling}.

\item The Nagamochi-Ibaraki~\cite{NagamochiI92}
sparsifier is a graph with $O(kn)$ edges, represented with $O(kn\log n)$ bits, 
that captures all vertex connectivities up to $k$.  Therefore,  
its space cannot be improved by more than a $\log n$-factor, 
even if the representation is not constrained to be a graph.

\item Whereas edge-connectivity admits compression to $O(n\log n)$ bits (as a Gomory-Hu tree),
vertex connectivity admits no compression by more than a constant factor; cf.~\cite{Benczur95a,hassin2007flow}.  
In particular, there exist dense graphs with $\Theta(n^2)$ edges such that storing all 
$\{\kappa(u,v)\}_{u,v}$ values requires $\Omega(n^2)$ bits.
\end{itemize}

We design a refined version of the Izsak-Nutov data structure that improves the space bound
on sparse graphs, and the query time and construction time in general.  
Our data structure occupies
$O(\min\{kn\log n, m\log n\log k\})$ space, 
i.e., it is an improvement
when $m < kn/\log k$.

Our data structure 
answers queries in 
$O(\log n)$ time, improving $O(k\log n)$~\cite{IzsakN12}.
Each vertex $v$ is associated with a set $S_v$ with $|S_v|=\Theta(k\log n)$
and the query time for $\kappa(u,v)$ is dominated by computing a \textsf{SetIntersection} query 
$S_u\cap S_v$, which reports every element of the intersection.
Some conditional lower bounds~\cite{KopelowitzPP16,HenzingerKNS15} 
show that space- and preprocessing-efficient data structures for \textsf{SetIntersection} must have 
$k^{1-o(1)}$ query time.  
We give a method for generating \emph{structured} sets $\{S_v\}_{v\in V}$ that allow us to answer
queries in optimal $O(|S_u\cap S_v|)=O(\log n)$ time, while still being sufficiently \emph{random}
to allow the correctness analysis of~\cite{IzsakN12} to go through.

The construction time for our data structure 
is $O(k^3\operatorname{polylog}(n))\cdot T_{\flow}(m)$, 
where $T_{\flow}(m)$ is the time for one max-flow computation.
The main subproblem solved in the construction algorithm is computing a 
$(1+\epsilon)$-approximate Gomory-Hu tree for \emph{element connectivity}.

\medskip 

If one looks only at the tradeoff between space and query time for $\vconn$ queries, 
there are now four incomparable tradeoffs.
\begin{itemize}
    \item A lookup table, with $O(n^2)$ 
    space and $O(1)$ query time.
    \item A Nagamochi-Ibaraki~\cite{NagamochiI92} sparsified graph, 
    with $O(kn)$ space and 
    \emph{max-flow} (up to flow value $k$) 
    query time, namely $O(\min\{k^2n, (kn)^{1+o(1)}\})$ time~\cite{ChenKLPGS22}.
    \item The Izsak-Nutov~\cite{IzsakN12} structure, as improved in this paper, 
    with $O(kn\log n)$ space and $O(\log n)$ query time.
    \item Nutov's structure~\cite{Nutov22}, with  $O(k^2 n)$ space and $O(1)$ query time.
\end{itemize}

\subsection{Organization}

Section~\ref{sect:preliminaries}
covers some preliminary concepts
such as vertex connectivity,
element connectivity, Gomory-Hu
trees, and the (vertex-cut version of) the isolating cuts
lemma of~\cite{LiP20,li2021vertex}.

Section~\ref{sec:lower_bound}
presents the lower bound on representations of vertex connectivity.
Section~\ref{sect:vconn-oracle}
presents a space- and 
query-efficient 
vertex connectivity oracle
based on Izsak and Nutov's~\cite{IzsakN12} labeling scheme.  
In order to build this oracle efficiently, we need to compute 
Gomory-Hu trees that capture all
element connectivities up to $k$. 
In Section~\ref{sec:GH_tree_construction} we 
show how to build one such Gomory-Hu tree in 
$O(k\operatorname{polylog}(n))\cdot T_{\flow}(m)$ time.

Section~\ref{sect:conclusion} concludes with 
some open problems.

\section{Preliminaries}\label{sect:preliminaries}

\subsection{Vertex Connectivity, Element Connectivity, and Gomory-Hu Trees}

Define $\kappa_G(u,v)$ to be the vertex connectivity of $u,v$ in $G=(V,E)$,
i.e., the maximum number of internally vertex-disjoint paths 
between $u$ and $v$.  By Menger's theorem~\cite{Menger27}, 
this is the size of the smallest mixed cut 
$C\subset (E\cup (V-\{u,v\}))$ whose
removal disconnects $u,v$.\footnote{Sometimes only ``pure'' vertex cuts $C\subset V-\{u,v\}$ are considered and $\kappa_G(u,v)$ is left undefined
or artificially defined to be $|V|-1$ whenever $u,v$ are adjacent in $G$. This definition fails to distinguish highly connected and poorly connected pairs of vertices that happen to be adjacent. It is possible to splice a vertex into every edge so that all \emph{mixed} cuts 
become \emph{pure} vertex cuts, 
but this increases the number of 
vertices in the graph.}
\emph{Element connectivity} is a useful generalization of vertex- and edge-connectivity.
The name was coined by Jain, Mandoiu, Vazirani, and Williamson~\cite{JainMVW02}, but the concept was latent in earlier work~\cite{FrankI93, EvenI98}. 
See~\cite{ChekuriK14,chekuri2015some,chekuri2015element,ChuzhoyK12} for recent work.
Let $U\subseteq V$ be a set of terminals. 
When $u,v\in U$, 
define $\kappa'_{G,U}(u,v)$ to be 
the maximum number of $(E\cup (V-U))$-disjoint paths between $u$ and $v$,
i.e., the paths may intersect, but only at terminals.
By duality, this is equivalently the size of the smallest 
mixed cut $C\subset (E\cup (V-U))$ whose removal disconnects $u,v$.
Observe that when $U=V$, 
$\kappa'_{G,U}(u,v)=\lambda_G(u,v)$ is 
the same as edge-connectivity, and 
when $U=\{u,v\}$, $\kappa'_{G,U}(u,v)=\kappa_G(u,v)$ 
is exactly the vertex connectivity of $u,v$.
More generally, we have the equality $\kappa_{G,U}'(u,v)=\kappa_G(u,v)$ 
whenever $U$ {\bf \emph{captures}} $u,v$, as defined below.

\begin{definition}\label{def:capture}
    A vertex set $U\subseteq V$ \textbf{captures} $u,v$ if $u,v\in U$ and 
    there exists a mixed $u$-$v$ cut $C\subset E\cup (V-\{u,v\})$ with 
    $|C|=\kappa_{G}(u,v)$ and $C\cap U=\emptyset$.
\end{definition}

\cref{lem:relation_between_connectivities},
follows directly from \cref{def:capture} and the definitions of $\kappa_G, \kappa_{G,U}'$.

\begin{lemma}\label{lem:relation_between_connectivities}
For any $U\supset \{u,v\}$, $\kappa'_{G,U}(u, v) \geq \kappa_{G}(u, v)$, 
with equality holding if and only if $U$ captures $u,v$.
\end{lemma}

If $C$ is a mixed cut, the connected components of $G-C$ are called \emph{sides} of $C$.  
Note that minimum edge-cuts have two sides but minimum vertex-cuts may have an unbounded number of sides.

We extend the definition of element connectivity to 
vertex sets $A,B\subset U$ as follows. 
Define $\kappa'_{G, U}(A, B)$ to be the size of 
the smallest mixed $A$-$B$ cut 
$C\subset (E\cup (V - U))$,
i.e., $G-C$ disconnects $a$ and $b$, 
for every $(a,b)\in A\times B$. 
Note that it may be the case that $A$ and $B$ are not contained in a single side of $C$.

For any $A\subset V$, define $\partial A$ to be the set of vertices adjacent to $A$ in $G$, 
but not in $A$.  Thus, whenever $A\cup\partial A\neq V$, 
$\partial A$ is a vertex cut disconnecting $A$ from 
$V-(A\cup \partial A)$.
It also satisfies a well-known submodularity property, 
which will be used in the proof of \cref{lem:isolating_cut_lemma_with_forbidden_terminals} and \cref{sec:GH_tree_construction}.

\begin{lemma}
\label{lem:submodularity-of-partial}
For any two subsets $A$, $B$ of $V$, $\abs{\partial A} + \abs{\partial B} \geq \abs{\partial (A\cup B)} + \abs{\partial (A\cap B)}$.
\end{lemma}

\ignore{
\begin{proof}
Fixing $v\in\partial A$, we have three cases.
When $v\in \partial A \cap \partial B$, $v$ contributes $2$ to the left hand side 
and at most $2$ to the right hand side. 
When $v\in \partial A \cap B$, we have $v\notin \partial B$ and $v\notin \partial(A\cup B)$,
so it contributes 1 to the left hand side and at most 1 to the right.
For $v\in \partial A - (B\cup \partial B)$, $v$ contributes $1$ in the left hand side,
and at most 1 to the right, since $v\notin \partial(A\cap B)$.
The same analysis applies to any $v\in \partial B$, and when $v\not\in \partial A\cup \partial B$, $v$ contributes zero to both sides.
\end{proof}
}

When the terminal set $U$ is an independent set in $G$, 
we can focus on element cuts $C\subseteq V-U$ that consist solely of vertices.
This is because any edge that might appear in $C$ can be substituted with
a non-terminal endpoint, which exists since $U$ is independent.
This observation simplifies some aspects of \cref{lem:isolating_cut_lemma_with_forbidden_terminals} and the algorithms of \cref{sec:GH_tree_construction}.
\begin{lemma}
\label{lem:partial-of-side-and-its-cut}
Suppose $U$ is an independent terminal set in $G(V, E)$, $A,B\subset U$, and 
$C\subset (V - U)$ is an $A$-$B$ cut. 
Define $D_A,D_B$ to be the union of sides of $C$ containing an $A$-terminal and $B$-terminal, respectively, so $D_A\cap D_B=\emptyset$.
Let $D$ be any side of $C$.

Then $\partial D \subseteq C$. 
Moreover, if $C$ is a \emph{minimum} $A$-$B$ cut, then $\partial D_A = \partial D_B = C$.    
\end{lemma}

\begin{proof}
Since $D$ is a connected component in $G-C$ and every $u\in \partial D$ is connected to $D$, 
it must be that $u\in C$.  Therefore $\partial D_A\subseteq C$ and $\partial D_B\subseteq C$. 
Observe that $\partial D_A$ is one $A$-$B$ cut, so by the minimality of $C$, 
it must be that $\partial D_A = C$.
\end{proof}

It is well known that Gomory-Hu trees exist for element connectivity; see~\cite{chekuri2015element, schrijver-book}. 
We use the following
general definition:
\begin{definition}
\label{def:elem_conn_gh_tree}
A $k$-Gomory-Hu tree for element connectivity w.r.t. graph 
$G$ and terminal set $U$ is a triple $(T,f,C)$ satisfying
\begin{itemize}
    \item (Flow equivalency) $T=(V_T,E_T,w:E_T\rightarrow [1,k])$ 
    is a weighted tree with embedding function 
    $f:U\rightarrow V_T$. 
    If $f(u)=f(v)$ then $\kappa'_{G,U}(u,v) > k$; 
    otherwise,
    $\kappa'_{G,U}(u,v) = \min_{e\in T(f(u),f(v))} w(e)$,
    where $T(x,y)$ is the unique $T$-path between $x$ and $y$.

    \item (Cut equivalency) $C:E_T\rightarrow 2^{E\cup (V-U)}$.
    For any edge $e\in T(f(u),f(v))$, 
    $C(e)$ is a cut of size $|C(e)|=w(e)$
    disconnecting $u,v$ in $G$.  
    (Each of the two connected components
    in $T-e$ corresponds to the union of some subset of the sides of $C(e)$
    under the projection $f$.)
\end{itemize}
\end{definition}

Note that $f$ is unnecessary when $k=\infty$ (in this case $V_T=U$),
and that $C$ is unnecessary if we are only interested in answering
$\min\{\kappa'_{G,U}(u,v),k+1\}$-queries.  This definition can be extended
to $(1+\epsilon)$-approximations by relaxing 
flow equivalency to
\[
\kappa'_{G, U}(u, v)\leq \min_{e\in T(f(u),f(v))} w(e) \leq (1+\epsilon)\kappa'_{G,U}(u,v).
\]

\subsection{Isolating Cut Lemma and Max-Flows}

Li and Panigrahi's~\cite{LiP20} \emph{isolating cuts} lemma finds, 
for a terminal set $I$, the minimum edge-cut 
separating $v\in I$ from $I-\{v\}$
in time proportional to $O(\log |I|)$ max-flows.  
It was generalized to vertex
connectivity in~\cite{li2021vertex} when $I$ is an independent set and 
here we generalize this lemma slightly
further to accommodate a \emph{forbidden} vertex set, which arises in 
the algorithms of \cref{sec:GH_tree_construction}.
In \cref{lem:isolating_cut_lemma_with_forbidden_terminals} 
we are searching for a set of cuts that are isolating w.r.t.~$I$, but avoid
a larger set $I\cup F$, where $F$ is the forbidden set.  

\begin{lemma}[Isolating cuts, vertex version with forbidden terminals]
\label{lem:isolating_cut_lemma_with_forbidden_terminals}
Let $I\cup F$ be an independent set in graph $G=(V, E)$, and $I\cap F=\emptyset$. 
We want to find, for each $v\in I$, a set $S_v\subset V$ such that 
(i) $S_v\cap I=\{v\}$ and $(\partial S_v)\cap (I\cup F)=\emptyset$,
(ii) it minimizes $\abs{\partial S_v}$, and
(iii) subject to criterion (ii), it minimizes $|S_v|$.
Then $\{S_v\}_{v\in I}$ can be computed 
with $O(\log|I|)$ calls to max-flow on graphs with 
$O(m)$ edges, $O(m)$ vertices, 
and capacities in $\{1,\infty\}$.
Moreover, each $S_v$ is unique, connected, and disjoint from every other $S_u$, $u\neq v$.
\end{lemma}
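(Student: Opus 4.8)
The plan is to follow the two-phase isolating-cuts framework of Li--Panigrahi~\cite{LiP20} and its vertex-connectivity adaptation~\cite{li2021vertex}, with the forbidden set $F$ absorbed into the capacities of the flow networks. First I would build the standard vertex-capacitated gadget $H$: split each $w\in V$ into an arc $w_{\mathrm{in}}\to w_{\mathrm{out}}$ of capacity $1$ if $w\notin I\cup F$ and capacity $\infty$ (respectively $k{+}1$, for the multigraph variant) otherwise, and replace each edge $\{u,v\}$ of $G$ by uncuttable arcs $u_{\mathrm{out}}\to v_{\mathrm{in}}$, $v_{\mathrm{out}}\to u_{\mathrm{in}}$. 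In $H$ every finite $s$--$t$ cut avoids the $\infty$-gadgets, so minimum $s$--$t$ cuts of $H$ are exactly minimum vertex cuts of $G$ separating $s$ from $t$ whose cut set is disjoint from $F$; moreover the vertices reachable from the source in the residual network give the unique \emph{minimal} minimum cut, which — taking $s$ to be the split copy of $v$ — realizes conditions (ii) and (iii) simultaneously (minimum $|\partial S_v|$, then minimum $|S_v|$). Thus the task reduces to computing a few minimal minimum $F$-avoiding vertex cuts.

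\emph{Phase 1 (the partition step).} Fix an injective labelling of $I$ by binary strings and, for each bit position $i=1,\dots,\lceil\log|I|\rceil$, let $I_0^{(i)}$ and $I_1^{(i)}$ be the classes of $I$ according to the $i$-th bit. Merge $I_0^{(i)}$ into a super-source and $I_1^{(i)}$ into a super-sink (keeping the $\infty$-gadgets so these stay uncuttable), compute the minimal minimum $F$-avoiding cut $C_i$ with one max-flow call, and let $S^{(i)}_v$ be the side containing $v$ — the source side if the $i$-th bit of $v$ is $0$, the sink side otherwise. Set $R_v=\bigcap_i S^{(i)}_v$. Distinct $v,v'\in I$ differ on some bit $i$, so they lie on opposite sides of $C_i$; hence the regions $\{R_v\}_{v\in I}$ are pairwise vertex-disjoint, $R_v\cap I=\{v\}$, and $\sum_{v\in I}\bigl(|R_v|+|E(G[R_v])|\bigr)=O(n+m)$. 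This phase uses exactly $\lceil\log|I|\rceil$ max-flow calls on $O(m)$-edge, $O(m)$-vertex networks.

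\emph{The crux: $S_v^{*}\subseteq R_v$, then Phase 2 (extraction).} Let $S_v^{*}$ be the target optimum for $v$. I would prove $S_v^{*}\subseteq R_v$ by uncrossing. The vertex-boundary function $A\mapsto|\partial A|$ is submodular, and one checks directly that $\partial(A\cap B)\cup\partial(A\cup B)\subseteq\partial A\cup\partial B$, so if $\partial A$ and $\partial B$ both avoid $F$, so do $\partial(A\cap B)$ and $\partial(A\cup B)$. Fix a bit $i$ and suppose the $i$-th bit of $v$ is $0$, so $S^{(i)}_v$ is the source side of $C_i$; then $S^{(i)}_v\supseteq I_0^{(i)}$, $S^{(i)}_v\cap I_1^{(i)}=\emptyset$, and $|\partial S^{(i)}_v|=|C_i|$ is minimum among $F$-avoiding cuts separating $I_0^{(i)}$ from $I_1^{(i)}$. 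Now $S_v^{*}\cap S^{(i)}_v$ is again a legal candidate for $v$ (it contains $v$, excludes $I\setminus\{v\}$, boundary avoids $F$), and $S_v^{*}\cup S^{(i)}_v$ still contains $I_0^{(i)}$ and excludes $I_1^{(i)}$, so $|\partial(S_v^{*}\cup S^{(i)}_v)|\ge|C_i|=|\partial S^{(i)}_v|$. Submodularity gives $|\partial(S_v^{*}\cap S^{(i)}_v)|\le|\partial S_v^{*}|$, and minimality of $S_v^{*}$ in both (ii) and (iii) forces $S_v^{*}\cap S^{(i)}_v=S_v^{*}$, i.e.\ $S_v^{*}\subseteq S^{(i)}_v$; ranging over $i$ gives $S_v^{*}\subseteq R_v$. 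Hence $S_v^{*}$ is recovered by one minimal-minimum-cut computation in $G_v$, the network obtained from $G[R_v]$ by contracting $V\setminus R_v$ to a sink $t_v$ (with $v$ as source, $F$ still forbidden); since the $R_v$ are disjoint these networks have total size $O(m+n)$ and cost one further max-flow's worth of work. I expect this uncrossing step to be the main obstacle, as it is the only place the presence of $F$ could quietly break things: one must check that passing to intersections/unions never puts a forbidden vertex into the boundary (exactly the containment above) and that the tie-break by $|S_v|$ survives intersection.

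\emph{Unit-capacity accounting.} If we only care about $|\partial S_v|$ when it is at most $k$, replace each $\infty$-gadget by $k{+}1$ parallel unit edges. The $I$-terminals are merged into the single super-source/super-sink of each max-flow call, contributing only $O(1)$ blown-up gadgets; the $|F|$ scattered forbidden vertices each contribute $O(k)$ parallel edges, i.e.\ $O(k|F|)$ in total, and the $O(m)$ original edges stay at unit multiplicity. This yields unit-capacity multigraphs with $O(m+k|F|)$ edges and vertices, as stated; all cut computations remain ordinary (minimal) max-flows, so the lexicographic optimality in (ii)--(iii) is automatic.
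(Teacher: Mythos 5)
Your route is genuinely different from the paper's: the paper disposes of this lemma in three sentences by invoking the isolating-cuts lemma for vertex cuts of~\cite{li2021vertex} as a black box and observing that the forbidden set $F$ is handled by giving each $w\in F$ an uncuttable $w_{\mathrm{in}}\to w_{\mathrm{out}}$ arc of capacity $\infty$ (or $k{+}1$ parallel unit edges), which matches your gadget and your unit-capacity accounting. Re-deriving the two-phase framework from scratch is fine in principle, and your Phase~1 and uncrossing are essentially sound; note, though, that by giving the $I$-vertices infinite capacity you compute cuts whose boundaries avoid $I$ as well as $F$, which is stronger than the literal condition (i) (under which $\partial S_v$ may contain terminals of $I\setminus\{v\}$) --- this stronger reading is what makes your uncrossing step go through and is what the application needs, but it is a deviation from the statement you set out to prove.

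The genuine gap is in your Phase~2. You contract all of $V\setminus R_v$ --- including $\partial R_v$, i.e.\ the Phase-1 cut vertices --- into the sink $t_v$, so those vertices can never appear in a cut of $G_v$; yet the optimum $\partial S_v^{*}$ may consist precisely of such vertices. Concretely, take $I=\{v,u\}$ with $v$ adjacent to $a_1,\dots,a_{10}$, every $a_j$ adjacent to $c$, and $c$ adjacent to $u$: the optimum is $S_v^{*}=\{v,a_1,\dots,a_{10}\}$ with $\partial S_v^{*}=\{c\}$, and Phase~1 yields $R_v=S_v^{*}$ with $\partial R_v=\{c\}$; in your $G_v$ the vertex $c$ is merged into $t_v$, every $a_j$ becomes adjacent to $t_v$, and the best available cut has size $10$ rather than $1$. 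In general, whenever $\partial S_v^{*}\cap\partial R_v\neq\emptyset$ the set $S_v^{*}$ is adjacent to $t_v$ in your network and is not the source side of any vertex cut there, so the extraction step returns a strictly suboptimal answer. The repair --- which is exactly the extra care the vertex version in~\cite{li2021vertex} takes over the edge version of~\cite{LiP20} --- is to keep the boundary vertices $\partial R_v$ as genuine unit-capacity vertices in $G_v$ and contract only $V\setminus(R_v\cup\partial R_v)$ into $t_v$. This also breaks your disjointness-based size bound, since boundary vertices are shared among regions; one instead charges each occurrence of a boundary vertex to an edge entering the corresponding region, giving total size $O(m)$, which is why the lemma is stated with $O(m)$ vertices rather than $O(n)$.
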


\begin{proof}
Our proof follows \cite[Lemma 4.2]{li2021vertex}, 
modified to accommodate the forbidden set $F$.
The set family $\{\{v\}\}_{v\in I}$ satisfies (i), 
so such sets exist. 
Among all set families satisfying (i), 
suppose $\{S_v\}_{v \in I}$ is our desired output, which 
minimizes $\abs{\partial S_v}$ for every $v\in I$, and then minimizes $\abs{S_v}$.
We first argue uniqueness.  Suppose, for some $v\in I$ there are $S_1,S_2$
that satisfy (i) and (ii), but have the same (minimum) size $|S_1|=|S_2|$.
It follows that $S_1\cap S_2$ and $S_1\cup S_2$ also satisfy (i),
and by (ii) it must be that $|\partial (S_1\cup S_2)| \geq |\partial S_1| = |\partial S_2|$.
By \cref{lem:submodularity-of-partial}, we have
\[\abs{\partial (S_1\cap S_2)} \leq \abs{\partial S_1} + \abs{\partial S_2} - \abs{\partial (S_1\cup S_2)} \leq \abs{\partial S_1} = \abs{\partial S_2},
\]
a contradiction since $S_1\cap S_2$ satisfies (i), and 
either $|\partial(S_1\cap S_2)|<|\partial S_1|=|\partial S_2|$ or $|S_1\cap S_2|<|S_1|=|S_2|$.
The graph induced by $S_v$ is 
clearly connected, as components not 
containing $v$ may be discarded.

Assign the vertices of $I$ distinct $\ceil{\log|I|}$-bit labels. 
For $j\in [\ceil{\log|I|}]$, let $(A_j,B_j)$ be the partition of $I$ according to the $j$th bit of their labels.
We compute the minimum size $A_j$-$B_j$ cut $C_j\subset V-(I\cup F)$ using 
the standard reduction from vertex/element connectivity to max-flow.\footnote{Specifically, replace each $v\in V$ by two vertices $v_{\IN},v_{\OUT}$ and an edge $(v_{\IN},v_{\OUT})$, and replace each undirected edge $\{u,v\}$ with two directed edges $(u_{\OUT},v_{\IN}),(v_{\OUT},u_{\IN})$.
Introduce source $s$ and sink $t$, with edges 
$(s,a_{\IN})$ and 
$(b_{\OUT},t)$ for every $a\in A_j,b\in B_j$.  All edges have capacity $\infty$, 
except for $(v_{\IN},v_{\OUT})$ with $v\in V-(I\cup F)$, which have unit capacity.
Clearly every $s$-$t$ cut with finite value 
corresponds to an $A$-$B$ vertex cut consisting of eligible vertices in $V-(I\cup F)$.}
Now define $U_v$ to be the connected component containing $v$ in 
$G - \bigcup_{j=1}^{\ceil{\log\abs{I}}} C_j$. 
Observe that $\{U_v\}_{v\in I}$ satisfies (i), namely $U_v \cap I = \{v\}$, since
for every $u\in I-\{v\}$ there is some bit-position $j$ where $u,v$ differ, 
implying $C_j$ is a $u$-$v$ cut.  

We claim $S_v \subseteq U_v$ for all $v\in I$.
Define $T_{v,j}$ to be the union of the sides of $C_j$ 
containing $A_j$-terminals, if $v\in A_j$, or $B_j$-terminals, if $v\in B_j$.
Note that $\partial S_v$ is the \emph{minimum} $v$-$(I-\{v\})$ cut
while $\partial (S_v \cap T_{v,j})$ is \emph{some} $v$-$(I-\{v\})$ cut.
By \cref{lem:partial-of-side-and-its-cut},
$\partial T_{v,j}=C_j$ is the \emph{minimum} $A_j$-$B_j$ cut
while $\partial(T_{v,j}\cup S_v)$ is \emph{some} $A_j$-$B_j$ cut.
Then
\begin{align*}
    \abs{\partial S_v} &\leq \abs{\partial (S_v \cap T_{v,j})},\\
    \abs{\partial T_{v,j}} &\leq \abs{\partial(T_{v,j}\cup S_v)},\\
    \abs{\partial S_v} + \abs{\partial T_{v, j}} &\geq \abs{\partial (S_v\cap T_{v, j})} + \abs{\partial(S_v\cap T_{v, j})}, & \text{\cref{lem:submodularity-of-partial}}
\end{align*}
which implies that all hold with equality.
Since $S_v\cap T_{v,j}$ also satisfies criteria 
(i) and (ii), criterion (iii) implies that
we must have $|S_v|=|S_v\cap T_{v,j}|$.
It follows that $S_v\subseteq T_{v,j}$,
and hence $S_v \subseteq \cap_{j=1}^{\ceil{\log|I|}} T_{v,j}$.  As $U_v$ is the connected component of $\cap_{j=1}^{\ceil{\log|I|}} T_{v,j}$ containing $v$,
and $S_v$ is connected, we have $S_v\subseteq U_v$.

We compute $S_v$ using the standard reduction from vertex/element connectivity to max-flow.  In particular,
we are looking for a vertex cut avoiding $I\cup F$ 
and separating $v$ and $\partial U_v$ in the graph induced 
by $U_v\cup \partial U_v$.  The graphs
induced by $\{U_v\cup \partial U_v\}$ are edge-disjoint,
so the combined cost of these calls to max-flow 
is equivalent to one max-flow computation on an 
$m$-edge graph.
\end{proof}

\section{Space Lower Bound on Vertex Connectivity Oracles}\label{sec:lower_bound}

\paragraph{Informal strategy and intuitions.}
We use an error correcting code-type argument to derive
an $\Omega(n^2)$-bit lower bound in the case that $k=n$.
By taking the union of $n/k$ disjoint copies of this construction on $\Theta(k)$-vertex graphs, we derive the $\Omega(kn)$-bit lower bound for general $k\in[1,n]$.
The idea is to show the existence of a \emph{codebook}
$\mathcal{T}$ of $n\times n$ 0--1 matrices with the following property.  
Each $T\in \mathcal{T}$ can be represented by a certain graph $G[T]$ on $O(n)$ vertices.
Assuming we have a $b$-bit vertex connectivity oracle for this graph, 
we query $\kappa_{G[T]}(u,v)$ for all pairs $u,v$. 
From these values we reconstruct a \emph{different}
0--1 matrix $\tilde{T}\neq T$, 
which is within the decoding radius
of $\mathcal{T}$, and therefore lets us deduce the identity of the original matrix $T$.
(The \emph{decoding radius} of $\mathcal{T}$ is half the minimum Hamming distance between $T,T'\in \mathcal{T}$.
The \emph{Hamming distance} between two matrices or two vectors is the number of positions 
where they differ.)
In other words, it must be that the vertex connectivity oracles of every $G[T],G[T']$, $T,T'\in \mathcal{T}$,
are distinct, and therefore $b\ge \log_2|\mathcal{T}|$.  

A key technical idea is to show that each $T\in \mathcal{T}$ in the codebook can, 
in a certain sense, be \emph{approximately} factored as the 
product of two rectangular 0--1 matrices, with addition and multiplication over $\mathbb{Z}$.
The specifics of the construction are detailed in \cref{thm:lb}.

\begin{theorem}\label{thm:lb}
There exists a constant $c$ and a subset 
$\mathcal{T} \subseteq \{0,1\}^{n\times n}$ of 0--1 
matrices having the following properties.
\begin{enumerate}
\item (Code Properties) $|\mathcal{T}|=2^{n^2/6}$, each row of each 
$T\in \mathcal{T}$ has Hamming weight
exactly $n/2$, and every two $T,T'\in\mathcal{T}$ has Hamming distance
greater than $n^2/6$, i.e., 
the decoding radius is at most $n^2/12$.
\item (Matrix Decomposition) For every $T\in\mathcal{T}$, there exists 
$A\in \{0,1\}^{n\times cn}, B\in\{0,1\}^{cn\times n}$ such that $C=AB$ approximately encodes $T$ in the following sense.  Let $\hat{T}\in\{0,1\}^{n\times n}$ be such that 
\[
\hat{T}(i,j) = \left\{
\begin{array}{ll}
0 & \mbox{if $C(i,j) < 2n$}\\
1 & \mbox{if $C(i,j) \geq 2n$}
\end{array}
\right.\]
Then $T,\hat{T}$ have Hamming distance less than $n^2/12$. Moreover,  $C(i,j) \le 2.1n$ for all $i,j$.
\item (Vertex Connectivity) For every $T\in\mathcal{T}$, there is an undirected graph 
$G[T] = (X\cup Y\cup Z, E)$ 
where $X=\{x_i\}_{i\in [n]}, Y=\{y_i\}_{i\in[cn]},Z=\{z_i\}_{i\in[n]}$
and $E\subset (X\times Y) \cup (Y\times Z)$,
that approximately 
encodes $T$ in the following sense.
Let $\tilde{T}\in\{0,1\}^{n\times n}$ be 
defined such that
\[
\tilde{T}(i,j) = \left\{
\begin{array}{ll}
0 & \mbox{if $\kappa(x_i,z_j) < 4n$}\\
1 & \mbox{if $\kappa(x_i,z_j) \geq 4n$.}
\end{array}
\right.
\]
Then the Hamming distance between 
$T,\tilde{T}$ is less than $n^2/12$.
\end{enumerate}
\end{theorem}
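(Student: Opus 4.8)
The plan is to run everything off a single probabilistic construction and then check the three parts in turn; building the codebook (Part~1) is where essentially all the difficulty lies. I would generate $\mathcal T$ \emph{from} the factorization rather than starting from an off-the-shelf code, since the members must also admit small-inner-product Boolean factorizations. Fix a sufficiently large constant $c$ and draw $A\in\{0,1\}^{n\times cn}$ with every row an independent uniform weight-$d$ set and $B\in\{0,1\}^{cn\times n}$ with every column an independent uniform weight-$d'$ set, with $d,d'=\Theta(n)$ tuned so that, over $\mathbb Z$, $C=AB$ has $\E[C(i,j)]=dd'/(cn)=2n$; then conditioned on row $i$ of $A$, the entries $C(i,j)$ are i.i.d.\ (shifted) hypergeometric variables concentrated around $2n$. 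Put $\hat T(i,j)=\mathbf 1\{C(i,j)\ge 2n\}$; since $\Pr[C(i,j)\ge 2n]=\tfrac12\pm O(n^{-1/2})$, each row of $\hat T$ has weight $n/2\pm\tilde O(\sqrt n)$ with high probability, so flipping $\tilde O(\sqrt n)$ entries per row produces a matrix $T$ with every row of weight exactly $n/2$ and Hamming distance $\tilde O(n^{3/2})\ll n^2/6$ from $\hat T$. The codebook $\mathcal T$ is then a maximal family of such $T$'s at pairwise Hamming distance $\ge n^2/3$, obtained either greedily (a Gilbert--Varshamov selection over the $(A,B)$-family, using concentration of the pairwise distances of independently generated $\hat T$'s around $n^2/2$ to bound the volume ruled out at each step) or by restricting $(A,B)$ to a structured, e.g.\ expander- or algebraic-code-based, subfamily on which $\hat T$ is provably distance-expanding; either way $|\mathcal T|$ meets the claimed $2^{n^2/3}$ and every codeword has exactly balanced rows, which is Part~1.

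For Part~2, given $T\in\mathcal T$ take the generating pair $(A,B)$; then the statement's $\tilde T$ is exactly $\hat T$, so $d_H(T,\tilde T)=d_H(T,\hat T)\ll n^2/6$ is already in hand. For the entrywise bound, $C(i,j)$ is a sum of bounded terms with mean $2n$, so a Chernoff bound gives $\Pr[C(i,j)\ge 2.1n]\le e^{-\Omega(n)}$, and a union bound over the $n^2$ entries yields $C(i,j)\le 2.1n$ everywhere whp.

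For Part~3 I would compute $\kappa_{G[T]}(x_i,z_j)$ exactly. Every $x_i$--$z_j$ path begins $x_i\to Y$ and ends $Y\to z_j$, and its length-$2$ instances are exactly the $C(i,j)$ common neighbours $N(x_i)\cap N(z_j)$, so any vertex cut contains all of them; after deleting $N(x_i)\cap N(z_j)$, every surviving path must exit through $N(x_i)\setminus N(z_j)\subseteq Y$, traverse $(X\cup Z)\setminus\{x_i,z_j\}$, and re-enter through $N(z_j)\setminus N(x_i)\subseteq Y$, so additionally deleting $(X\cup Z)\setminus\{x_i,z_j\}$ (size $2n-2$) disconnects the pair. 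Hence $\kappa_{G[T]}(x_i,z_j)\le\min\{\deg x_i,\;\deg z_j,\;C(i,j)+2n-2\}$. For the matching lower bound I would show that, whp and simultaneously for all $(i,j)$, one can route that many internally vertex-disjoint paths in $G[T]$: the $C(i,j)$ length-$2$ paths, plus one length-$4$ path $x_i\to p\to v\to q\to z_j$ through each used vertex $v\in(X\cup Z)\setminus\{x_i,z_j\}$ with $p\in N(x_i)\setminus N(z_j)$ and $q\in N(z_j)\setminus N(x_i)$; feasibility reduces to a flow / system-of-distinct-representatives condition that holds with probability $1-e^{-\Omega(n)}$ per pair by expansion of the random bipartite neighbourhoods, and is then union-bounded over the $n^2$ pairs. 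Choosing $d=d'=4n-2$ (so $c\approx 8$, a valid large constant) makes $\deg x_i=\deg z_j=4n-2$ and gives $\kappa_{G[T]}(x_i,z_j)=\min\{4n-2,\;C(i,j)+2n-2\}$, whence $\kappa_{G[T]}(x_i,z_j)\ge 4n-2\iff C(i,j)\ge 2n\iff\hat T(i,j)=1$. Thus the matrix decoded from connectivity queries is exactly $\hat T$, which is within $n^2/6$ of $T$.

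The crux is Part~1: simultaneously forcing $|\mathcal T|=2^{n^2/3}$, pairwise distance $\ge n^2/3$, exactly balanced rows, \emph{and} a bounded-inner-product Boolean factorization of every member. A generic random or greedy code has the right distance but not obviously the factorization, while constraining the factorization tends to depress either the rate or the distance, so the construction must be engineered to hit the stated parameters exactly; reconciling these is the heart of the matter and is likely where an explicit algebraic/expander-based family of $(A,B)$'s is needed. A secondary technical burden is to drive the per-pair failure probability in the Part~3 routing lemma below $n^{-2}$, which pins down the densities $d,d'$ and the constant $c$.
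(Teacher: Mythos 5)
Your Parts 2 and 3 are essentially fine and run parallel to the paper's argument: the exact formula $\kappa(x_i,z_j)=\min\{\deg x_i,\deg z_j,\,C(i,j)+2n-2\}$, proved by the explicit cut plus a packing of length-2 and length-4 paths via perfect matchings in random bipartite graphs, is the same mechanism the paper uses (your fully random $A$ even spares you the paper's hack of seeding each row of $A$ with $O(\log n)$ random ones to make the matching argument clean). The genuine gap is exactly where you locate it: Part 1. By generating the codebook as the thresholded image of random factorizations $(A,B)$, you must extract from that image $2^{n^2/3}$ matrices at pairwise distance $\ge n^2/3$, and neither of your suggestions delivers this. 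The greedy/Gilbert--Varshamov route needs the probability that two independent draws $\hat T,\hat T'$ fall within distance about $n^2/3$ to be far below $2^{-2n^2/3}$, so that a union bound over all pairs of samples survives. ``Concentration around $n^2/2$'' cannot supply that: the underlying randomness is $O(n)$ independent rows/columns, each of which can shift the distance by $n$, so bounded-difference arguments give only $e^{-\Omega(n)}$; and even in the ideal case where $\hat T$ were uniform on $\{0,1\}^{n\times n}$, the probability of two draws being within distance $n^2/3$ is about $2^{-(1-H(1/3))n^2}\approx 2^{-0.08\,n^2}$, vastly larger than $2^{-2n^2/3}$. (Indeed, at the literal constants a rate-$1/3$ code with relative distance $1/3$ exceeds the Elias--Bassalygo bound, so the stated numbers need adjusting in any proof; but the structural point stands: your route additionally requires an anti-concentration/entropy statement about the factorization image that you neither state nor prove, and the fallback ``expander or algebraic subfamily'' is not specified at all.)

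The paper sidesteps this entirely by decoupling the code from the factorization, and that decoupling is the idea your proposal is missing. It first picks the codebook by a routine probabilistic argument over \emph{all} row-balanced matrices (all that is ultimately needed is that the error radius incurred in Parts 2--3 is below half the minimum distance), and then shows that \emph{every} such $T$ admits an \emph{approximate} factorization: choose $B$ random with balanced rows, and choose $A$ \emph{adaptively given $T$ and $B$}, setting $A(i,k)=1$ for $4n$ indices $k$ whose rows $B(k,\cdot)$ agree with the target row $T(i,\cdot)$ in at least $n/2+\sqrt n$ positions. This ``eligibility'' rule biases each entry $C(i,j)$ by $\Theta(\sqrt n)$ in the direction dictated by $T(i,j)$, so thresholding at $2n$ recovers $T$ up to about $e^{-8}n^2$ errors, well inside the decoding radius; no property of the image of a fixed random map is ever needed. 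To close your gap you would either have to import this per-codeword adaptive factorization (at which point you have reproduced the paper's proof) or establish the strong separation property of your factorization image, which is precisely the unresolved crux you yourself flag.
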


We prove \cref{thm:lb} through a series of lemmas.

\begin{lemma}\label{lem:code-properties}
Fix a sufficiently large even $n$.
There exists a codebook $\mathcal{T}$
where $|\mathcal{T}| = 2^{n^2/6}$, 
each row of each $T\in\mathcal{T}$ has
Hamming weight $n/2$, and every two $T,T'\in \mathcal{T}$ have Hamming distance at 
least $n^2/6$.
\end{lemma} 

\begin{proof}
Pick a set $\mathcal{T}_0$ 
of $2\cdot 2^{n^2/6}$ 0--1 matrices 
uniformly at random.
Obtain $\mathcal{T}_1$ by discarding any $T\in\mathcal{T}_0$ if
\begin{enumerate}
\item[I.] $T$ is within Hamming distance $n^2/5$ of another $T'\in\mathcal{T}_0$, or
\item[II.] $T$ is \emph{not} within Hamming distance $n\sqrt{n\ln n}$ of a matrix
whose rows have Hamming weight $n/2$.
\end{enumerate}
We argue that conditions I and II each cause a $o(1)$-fraction of $\mathcal{T}_0$
to be discarded, so $|\mathcal{T}_1|\geq 2^{n^2/6}$.
The number of matrices within Hamming distance $\delta n^2$ of of $T$, $\delta<1/2$, is
\begin{align*}
    \sum_{t=1}^{\delta n^2} {n^2\choose t} 
    &= \poly(n) {n^2 \choose \delta n^2}\\
    &= \poly(n) \frac{(n^2/e)^{n^2}}{(\delta n^2/e)^{\delta n^2}((1-\delta)n^2/e)^{(1-\delta)n^2}} & \text{(Stirling's approximation)}\\
    &= \poly(n) 2^{n^2 H(\delta)},
\end{align*}
where $H(\delta)=\delta\log_2(\delta^{-1})+(1-\delta)\log_2((1-\delta)^{-1})$ is the 
binary entropy function.  
For $\delta=1/5$, $H(\delta)<0.73$, so 
the probability that $\mathcal{T}_0-\{T\}$ 
hits the Hamming ball
of radius $n^2/5$ around $T$ is, by a union bound, at most
\[
\poly(n) 2^{n^2(H(1/5)-1)}\cdot 
2\cdot 2^{n^2/6}
< 2^{-0.1 n^2}
= o(1).
\]
I.e., condition I causes a $o(1)$-fraction of the matrices to be discarded in expectation.
Turning to condition II, if $T\in\mathcal{T}_0$ is chosen uniformly at random, the probability that
its first row has Hamming weight 
outside the range $[n/2 - \sqrt{n\ln n}, n/2 + \sqrt{n\ln n}]$ 
is at most $2\cdot \exp(-2(\sqrt{n\ln n})^2/n) = 2/n^2$.\footnote{Here we use the simplified Chernoff-Hoeffding bound, 
$\Pr(X > \E(X)+t),\Pr(X<\E(X)-t) < \exp(-2t^2/n)$, 
where $X$ is the sum of $n$ independent random variables in $[0,1]$.}
If all of $T$'s rows fail to have this property then $T$ appears in $\mathcal{T}_1$.
Hence, at most a $2/n=o(1)$ fraction of the matrices are discarded due to condition II.
The codebook $\mathcal{T}$ is obtained by replacing each $T\in\mathcal{T}_1$
with its nearest matrix with Hamming weight $n/2$ in each row.
Conditions I and II imply that any two $T,T'\in\mathcal{T}$ have 
Hamming distance at least 
$n^2/5 - 2n\sqrt{n\ln n} > n^2/6$.
\end{proof}

\begin{lemma}\label{lem:AB-construction}
For every $T\in\mathcal{T}$, there exists 
$A\in \{0,1\}^{n\times cn}, B\in\{0,1\}^{cn\times n}$ such that $C=AB$ approximately encodes $T$ in the following way.  Let $\hat{T}\in\{0,1\}^{n\times n}$ be such that 
\begin{align*}
\hat{T}(i,j) &= \left\{
\begin{array}{ll}
0 & \mbox{if $C(i,j) < \tau$}\\
1 & \mbox{if $C(i,j) \geq \tau$,}
\end{array}
\right.\\
\text{for any threshold } \tau &= 2n\pm O(1).
\end{align*}
Then $T,\hat{T}$ have Hamming distance less than $n^2/12$. 
Moreover,  $C(i,j) \le 2.1n$ for all $i,j$.
\end{lemma}

\begin{proof}
The construction of $A,B$ is probabilistic.  
We pick $B$ uniformly at random such that each row has Hamming weight $n/2$,
then choose $A$ depending on $T$ and $B$.
We call the pair $(i,k)$ \emph{eligible} 
if the vector $B(k,\cdot)$ has an unusually high 
agreement with $T(i,\cdot)$, in particular,
\[
\text{$(i,k)$ eligible means } 
|\{j : B(k,j) = T(i,j)\}| \geq n/2 + \sqrt{n}.
\]
For each row $i$, we set $A(i,k)=1$ for the first $4n$ 
values of $k$ for which $(i,k)$ is eligible.
For the time being, assume that 
$\Pr((i,k) \text{ is eligible}) \geq p$ for some absolute constant $p$.  This is proved separately in Lemma~\ref{lem:eligible}.
If $c=4p^{-1}+1$ then we can build $A$ with high probability.
Fix an $i$ and let $X$ be the number of eligible pairs $(i,k)$.
Then $\E(X) = pcn = 4n + pn$ and row $A(i,\cdot)$ will fail
to have weight $4n$ when $X < \E(x) - pn$.  By the Chernoff-Hoeffding bound, this happens with probability $\exp(-\Omega(p^2 n))$.  Therefore, with probability $1-\exp(-\Omega(p^2 n))$ the matrix $A$ can be constructed, with all rows having Hamming weight $4n$.

Now let $C=AB$ and consider the random variable $C(i,j)=\sum_k A(i,k)\cdot B(k,j)$.  
Conditioning on $A(i,k)=1$ means that $(i,k)$ is eligible and hence $B(k,j)=T(i,j)$ for $n/2+\sqrt{n}$ values of $j$.  Thus we have
\[
\Pr(A(i,k)\cdot B(k,j) = T(i,j) \;|\; A(i,k)=1) \geq \frac{n/2 + \sqrt{n}}{n} = 1/2 + 1/\sqrt{n}.
\]
It follows that 
\begin{align*}
    \E(C(i,j)) 
    &= \sum_{k} \E(A(i,k)\cdot B(k,j)) \\
    &= \sum_{k : A(i,k)=1} \E(A(i,k)\cdot B(k,j) \mid A(i,k)=1)
    \;
    \left\{\begin{array}{ll}
    \geq 4n(1/2 + 1/\sqrt{n}) & \mbox{when $T(i,j)=1$,}\\
    \leq 4n(1/2 - 1/\sqrt{n}) & \mbox{when $T(i,j)=0$.}
    \end{array}
    \right.
\end{align*}
In other words, if $T(i,j)=1$ then $C(i,j)$ 
dominates $\text{Binom}(4n,1/2+1/\sqrt{n})$ 
and if $T(i,j)=0$ then $\text{Binom}(4n,1/2-1/\sqrt{n})$ dominates $C(i,j)$.
Since $\E(C(i,j)) = 2n + 4\sqrt{n}$ or $2n-4\sqrt{n}$,
the probability that $\hat{T}(i,j)\neq T(i,j)$ is 
misclassified is the probability that $C(i,j)$ deviates from its expectation by 
$(2n + 4\sqrt{n})-\tau$ or 
$\tau - (2n-4\sqrt{n})$, 
both of which are $4\sqrt{n}\pm O(1)$.
By Chernoff-Hoeffding, the probability this happens
is at most $\exp(2(4\sqrt{n}-O(1))^2/4n) = (1+o(1))\exp(-8)$.
Thus, in expectation $\hat{T}$ and $T$ have Hamming
distance much less than $n^2/12$, 
the decoding radius of $\mathcal{T}$.
Finally, by Chernoff-Hoeffding bounds again,
$\Pr(C(i,j) \geq 2.1n) = \exp(-\Omega(n))$.
This establishes the \emph{existence} of matrices 
$A,B$ having the stated properties.
\end{proof}

\begin{lemma}\label{lem:eligible}
    $\Pr((i,k) \text{ is eligible})=\Omega(1)$.
\end{lemma}

\begin{proof}
    There are ${n\choose n/2}$ choices for $B(k,\cdot)$
    and ${n/2\choose n/4+t}^2$ ways for $B(k,\cdot)$ 
    to agree with $T(i,\cdot)$ in exactly $n/2+2t$ 
    positions.  By Stirling's approximation
    $N! \sim \sqrt{2\pi N}(N/e)^N$, 
    with the leading constant in $[e^{\frac{1}{12n+1}},e^{\frac{1}{12n}}]$.  Thus, we have
\begin{align*}
\frac{{n/2\choose n/4+t}^2}{{n\choose n/2}}
&\sim \frac{2\pi(n/2)\cdot 2\pi(n/2)}{2\pi(\frac{n}{4}+t)2\pi(\frac{n}{4}-t)\sqrt{2\pi n}}
\frac{((n/2)/e)^{4(n/2)}}{((n/4 +t)/e)^{2(n/4+t)}((n/4-t)/e)^{2(n/4-t)}(n/e)^n}\\
&\geq \frac{4}{\sqrt{2\pi n}}
\exp(2n\ln(n/2) - (n/2+2t)\ln(n/4+t) - (n/2-2t)\ln(n/4-t) - n\ln n)\\
&= \frac{4}{\sqrt{2\pi n}}\exp(-(n/2)[(1+\epsilon)\ln(1+\epsilon) + (1-\epsilon)\ln(1-\epsilon)])\\
\intertext{where $\epsilon = 4t/n$. 
When $x\in (0,1)$ we have 
$\ln(1+x)\leq x$ and $\ln(1-x)\leq -x$. Continuing,}
&\geq \frac{4}{\sqrt{2\pi n}}\exp(-(n/2)2\epsilon^2) \;=\; \frac{4}{\sqrt{2\pi n}}\exp(-16t^2/n).
\end{align*}
In conclusion, 
$\Pr((i,k) \text{ is eligible}) \geq \sum_{t=\sqrt{n}/2}^\infty \frac{4}{\sqrt{2\pi n}}\exp(-16t^2/n) = \Omega(1)$.
\end{proof}

\begin{lemma}\label{lem:AB-to-graph}
For each $T\in\mathcal{T}$, there is a graph
$G[T] = (X\cup Y\cup Z,E)$ where 
$E\subset (X\times Y)\cup (Y\times Z)$
and
$X=(x_i)_{i\in[n]}, Y=(y_i)_{i\in [cn]}, Z = (z_i)_{i\in [n]}$
that approximately
encodes $T$ in the following way.
Let $\tilde{T}$ be defined by the vertex connectivities on $X\times Z$:
\[
\tilde{T}(i,j) = \left\{
\begin{array}{ll}
0 & \mbox{if $\kappa(x_i,z_j) < 4n$}\\
1 & \mbox{if $\kappa(x_i,z_j) \geq 4n$.}
\end{array}
\right.
\]  
Then $T,\tilde{T}$ have Hamming 
distance less than $n^2/12$.
\end{lemma}

\begin{proof}
Roughly speaking we take $A,B$ from \cref{lem:AB-construction} and let 
$A$ be the adjacency matrix
for $E\cap (X\times Y)$ and 
$B$ be the adjacency matrix
for $E\cap (Y\times Z)$.
However for technical reasons we slightly 
modify
the construction of $A$.
In particular, we set $A(i,k)=1$ for
\begin{enumerate}
    \item[(i)] $r=\Theta(c\log n)$ values of 
$k$ chosen uniformly at random, then 
    \item[(ii)] $4n-r$ additional $k$ for which 
$(i,k)$ is eligible.
\end{enumerate}
Note that if $A(i,k)=1$ due to (i), 
$\E(A(i,k)\cdot B(k,j)=T(i,j))=1/2$
(rather than $1/2 \pm 1/\sqrt{n}$), 
which distorts $\E(C(i,j))$ 
by $r/\sqrt{n} < 1$.  
This is within the margin of 
uncertainty for $\tau = 2n\pm O(1)$
tolerated by \cref{lem:AB-construction},
so all the conclusions of \cref{lem:AB-construction} still hold.

Recall that $C=AB$.  Observe that 
$C(i,j)$ reflects the maximum flow from 
$x_i$ to $z_j$ in $G[T]$,
if $G[T]$ were a unit-capacity network
with all edges directed from $X\rightarrow Y\rightarrow Z$.  However, $G[T]$ is \emph{undirected}, and by Menger's theorem,
$\kappa(x_i,z_j)$ is the maximum size set of internally vertex disjoint paths 
connecting $x_i$ and $z_j$.

We claim that with high probability, 
the vertex connectivity $\kappa(x_i,z_j)$ 
is exactly 
\[
\kappa(x_i,z_j) = \min\{C(i,j) + 2(n-1), 4n\}.
\]
In particular we claim that 
$\kappa(x_i,z_j)=|P_{0}\cup P_{1}\cup P_{2}|$, where 
$P_0$ is a set of $C(i,j)$ length-2 paths, 
$P_1$ is a set of $n-1$ paths internally disjoint from $P_0$ of the form 
$x_i \rightarrow Y\rightarrow X\rightarrow Y\rightarrow z_j$ and 
$P_2$ is a set of at most $n-1$ paths 
internally disjoint from $P_0\cup P_1$ 
of the form 
$x_i \rightarrow Y\rightarrow Z\rightarrow Y\rightarrow z_j$.  
See \cref{fig:XYZ}.

\begin{figure}
\centering
    \begin{tabular}{c@{\hspace{1cm}}c}
    \scalebox{.47}{\includegraphics{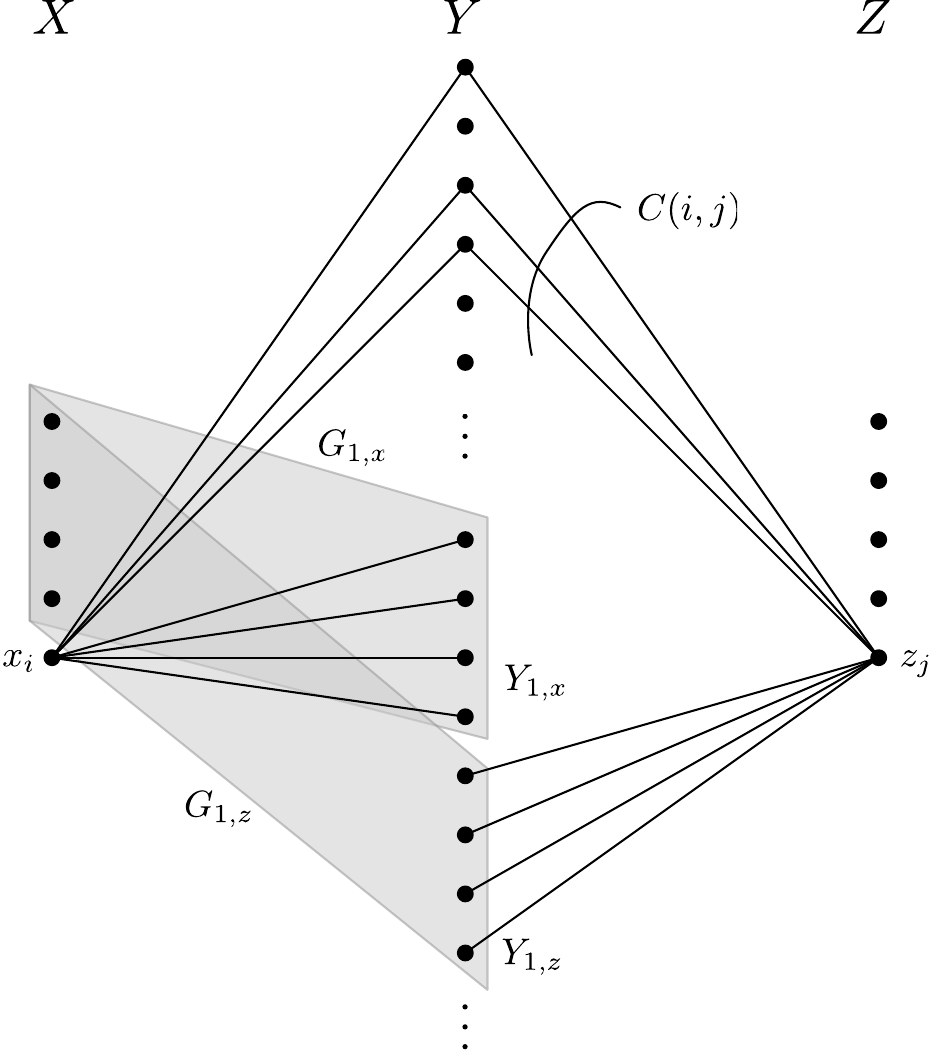}}
&    \scalebox{.47}{\includegraphics{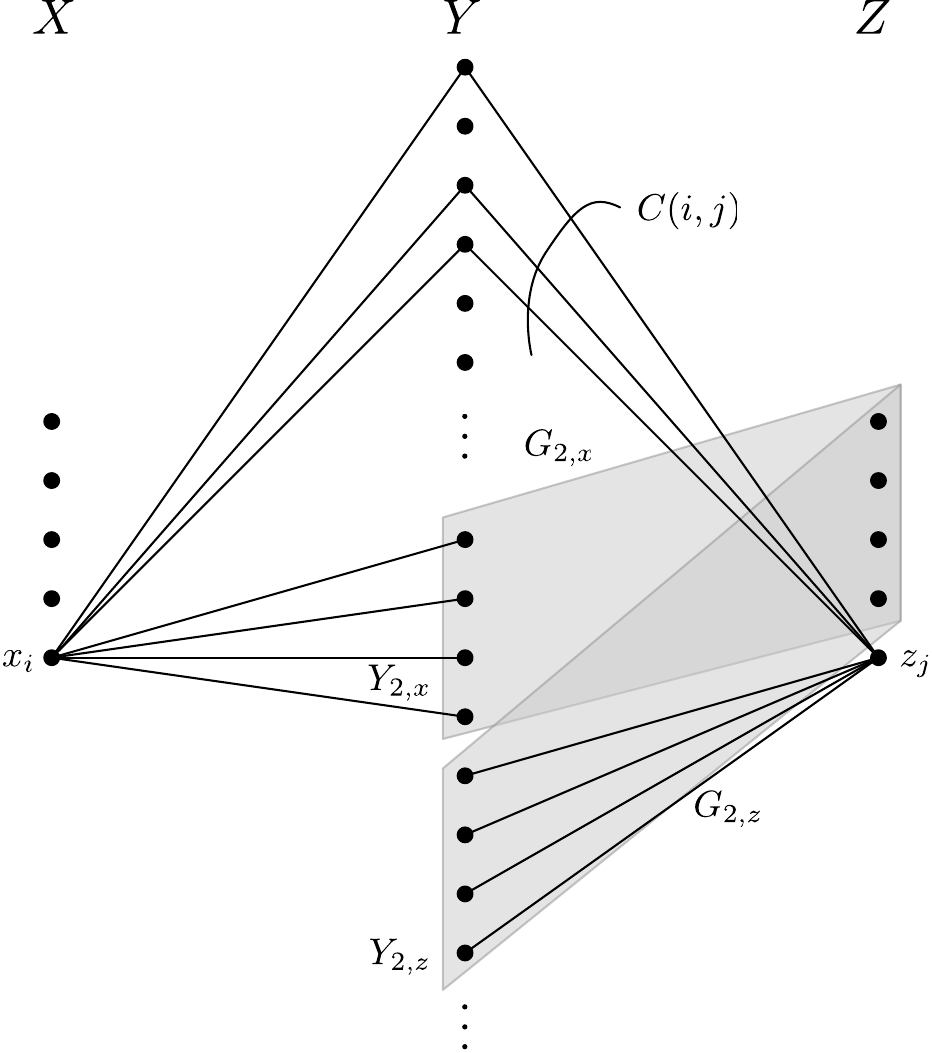}}\\
(a) & (b)
    \end{tabular}
\caption{\label{fig:XYZ} $P_0$ consists of $C(i,j)$ length-2 paths from $x_i$ to $z_j$.  
{\bf (a)} $G_{1,x},G_{1,z}$ contain perfect matchings $M_{1,x},M_{1,z}$ with high probability.  $P_1$ consists of $|Y_{1,x}|=n-1$ paths via $M_{1,x},M_{1,z}$.
{\bf (b)} $G_{2,x},G_{2,z}$ contain perfect matchings $M_{1,x},M_{1,z}$ with high probability.  
$P_2$ consists of $|Y_{2,x}|=\min\{n-1,4n-|P_0|-|P_1|\}$ 
paths via $M_{2,x},M_{2,z}$.}
\end{figure}

We construct $P_1$ as follows.
Choose $Y_{1,x}$ to be any $n-1$ neighbors of $x_i$ that are not part of $P_0$ paths,
and $Y_{1,z}$ to be any $n-1$ neighbors of $z_j$ that are not part of $P_0$ paths.  Clearly $Y_{1,x}\cap Y_{1,z}=\emptyset$.
Let $G_{1,x},G_{1,z}$ be the subgraphs of $G[T]$ induced by $X-\{x_i\}\cup Y_{1,x}$ and $X-\{x_i\}\cup Y_{1,z}$, respectively.  
These graphs contain many random edges of $A$ generated by (i).
It is well known~\cite{Bollobas11,FriezeK16} that such random 
graphs contain perfect matchings with high probability.  
We give a self-contained proof in \cref{lem:perfect-matching}.
Let $M_{1,x},M_{1,z}$ be the perfect matchings of $G_{1,x},G_{1,z}$.
Together with $x_i,z_j$, these 
form $n-1$ paths $P_1$ internally vertex 
disjoint from $P_0$.  
The construction of paths $P_2$ is mostly symmetric.
Let $Y_{2,x}$ be $\min\{n-1, 4n - |P_0| - |P_1|\}$ neighbors of $x_i$ not already included in $P_0,P_1$,
and $Y_{2,z}$ be $|Y_{2,x}|$ neighbors of $z_j$ not included in $P_0,P_1$. Note that both $Y_{2,x}$ and $Y_{2,z}$ have size at least  $0.9n$ because $|P_0| \le 2.1n$ and $|P_1|=n-1$.
By \cref{lem:perfect-matching}, the graphs 
$G_{2,x},G_{2,z}$ induced by $Z-\{z_j\}\cup Y_{2,x}$ and $Z-\{z_j\}\cup Y_{2,z}$ also contain perfect matchings
$M_{2,x},M_{2,z}$ with high probability.
Together with $x_i$ and $z_j$ these generate $|Y_{2,x}|$ paths internally vertex disjoint from $P_0,P_1$.

The construction of $P_0\cup P_1\cup P_2$ shows that
$\kappa(x_i,z_j) \geq \min\{C(i,j)+2(n-1),4n\}$.
The upper bound $\kappa(x_i,z_j) \leq \min\{C(i,j)+2(n-1),4n\}$
follows from taking the better of two vertex cuts.
We can separate $x_i$ from $z_j$ by removing $x_i$'s neighborhood
of $4n$ vertices, or we can remove the $|C(i,j)|$ 
$Y$-vertices on the $P_0$ paths, together with 
$X\cup Z - \{x_i,z_j\}$.

Suppose that $T(i,j)=1$.
By \cref{lem:AB-construction}, 
$\E(C(i,j))\geq 2n+4\sqrt{n}$ and with probability $(1+o(1))\exp(-8)$, 
$C(i,j)\geq 2n+2$ and $\kappa(x_i,z_j)=4n$.
If $T(i,j)=0$ then 
$\E(C(i,j))\leq 2n-4\sqrt{n}$ and with
probability $(1+o(1))\exp(-8)$,
$C(i,j) < 2n+2$ and $\kappa(x_i,z_j)<4n$.
Thus, in expectation $\tilde{T},T$ have Hamming distance much less than $n^2/12$,
so there must exist a graph $G[T]$ having this property.
\end{proof}

\begin{lemma}\label{lem:perfect-matching}
The graphs $G_{1,x},G_{1,z},G_{2,x},G_{2,z}$ have perfect matchings with high probability.
\end{lemma}

\begin{proof}
We claim the degree of any $x_j \in X-\{x_i\}$ 
in $G_{1,x}$ or $G_{1,z}$ is $\Omega(\log n)$
with high probability.  Due to step (i) in the construction of $A$, $x_j$ picks $\Theta(c\log n)$ neighbors in $Y$ uniformly at random,
each of which lies in $Y_{1,x}$ with probability $\Theta(1/c)$.  By a Chernoff-Hoeffding bound,
$x_j$ has at least $\Omega(\log n)$ neighbors in $Y_{1,x}$ with probability $1-1/\poly(n)$.

In the remainder of the proof, we only
assume $G'=(L\cup R,E)$ is a random bipartite graph with $|L|=|R|$ where for $v\in L$, 
$\deg(v)=d$ and $N(v)$ is a uniformly random 
$d$-subset of $R$, 
$d \geq c_0 \ln n \geq 3\ln n$.
(By construction every row of $B$ has 
Hamming weight $n/2$, which implies that in 
$G'=G_{2,x}$, $d=n/2$, and in $G'=G_{2,z}$, $d=n/2-1$.)

By Hall's theorem, if $G'$ has no perfect matching then there exists a \emph{witness} 
$(I,J)$ with $I\subset L$, $J\subset R$, 
$N(I)\subseteq J$ and $|J|=|I|-1$.
We have 
\[
\Pr(N(I)\subseteq J) = \Pr\left(\bigcup_{v\in I} N(v) \subset J\right) \leq (|J|/|R|)^{d|I|}.
\]
Letting $n=|L|=|R|$ and $\ell=|I|$,
the expected number of such witnesses
is $\sum_{\ell = d+1}^n {n\choose \ell}{n\choose 
\ell-1}\left(\frac{\ell-1}{n}\right)^{d\ell}$.
We bound the sum in two parts: $\ell\leq \ceil{n/2}$ 
and $\ell > \ceil{n/2}$.
\begin{align*}
    \sum_{\ell = d+1}^{\ceil{n/2}} {n\choose \ell}{n\choose \ell-1}\left(\frac{\ell-1}{n}\right)^{d\ell}
    &\leq \sum_{\ell=d+1}^{\ceil{n/2}} \left(\frac{en}{\ell}\right)^{2\ell}\left(\frac{\ell-1}{n}\right)^{d\ell} < n^{-\omega(1)}.
\intertext{Now let $\ell'=n-\ell$.  The second part is then}
    \sum_{\ell'=0}^{\floor{n/2}} {n\choose \ell'}{n\choose \ell'+1}\left(1-\frac{\ell'+1}{n}\right)^{d(n-\ell')}
    &\leq 
    \sum_{\ell'=0}^{\ceil{n/2}} \exp\left(2(\ell'+1)\ln\left(\frac{en}{\ell'+1}\right)-\frac{d(\ell'+1)(n-\ell')}{n}\right)\\
    &< O(n^{-(c_0-2)}).
\end{align*}
Thus, with probability $1-O(n^{-(c_0-2)})$, there are no Hall
witnesses and $G'$ has a perfect matching.
\end{proof}

\cref{thm:lb} implies strong lower bounds on a variety of data structures for representing vertex connectivity information.

\begin{theorem}\label{thm:vertex-connectivity-lb}
Suppose $\mathscr{D}(G,k)$ is a data structure for 
an undirected $n$-vertex 
graph $G$ that can answer 
any of the following query types:
\begin{description}
\item[Threshold Queries.] Determine whether 
$\kappa(u,v) < k$ or $\kappa(u,v) \geq k$.
\item[Equality Queries.] Determine whether 
$\kappa(u,v) = k$ or $\kappa(u,v) \neq k$.
\item[Gap Queries.]
Distinguish 
$\kappa(u,v) \leq k-\sqrt{k}/4$ from 
$\kappa(u,v) \geq k + \sqrt{k}/4$.
(If $\kappa(u,v)\in (k-\sqrt{k}/4,k+\sqrt{k}/4)$ then 
any answer is correct.)
\end{description}
Then $\mathscr{D}$ requires $\Omega(kn)$ bits of 
space in the worst case.
\end{theorem}

\begin{proof}
Pick $T\in \mathcal{T}$, 
and let $G[T]$ be the $(c+2)n$-vertex graph 
encoding of $T$ from \cref{thm:lb}.
Using $\mathscr{D}(G[T], k)$ with $k=4n$,
we can generate the matrix $\tilde{T}$
with \emph{Threshold} queries or \emph{Equality} queries
(since no $\kappa(x_i,z_j)$-values are 
strictly greater than $4n$)
then deduce $T$ since $\tilde{T}$ is within its decoding radius $n^2/12$.
Thus, $\mathscr{D}$ requires at 
least $\log|\mathcal{T}| = \Theta(n^2)$ bits of space.

Turning to \emph{Gap} queries, 
take $k=4n-\sqrt{n}/2$, so the data structure correctly 
distinguishes
$\kappa(x_i,z_j) \leq k - \sqrt{k}/4 = 4n-\sqrt{n}+O(1)$
from
$\kappa(x_i,z_j) \geq k + \sqrt{k}/4 = 4n-O(1)$.
We generate $\tilde{T}$ according to these queries.
When $T(i,j)=1$, $\kappa(x_i,z_j)=4n$ with probability $1-(1+o(1))\exp(-8)$ and in this case $\tilde{T}(i,j)=1$.
When $T(i,j)=0$, $\kappa(x_i,z_j)$ is less than 
$4n-4\sqrt{n}$ in expectation. 
Recall that $\kappa(x_i,z_j) = C(i,j) + |P_1|+|P_2|$, 
where the number of 4-edge paths $|P_1|+|P_2|$ is at most $2(n-1)$.  Hence, $\kappa(x_i,z_j) \geq 4n-\sqrt{n}$ implies that
$C(i,j) \geq 2n-\sqrt{n}+2$. 
Recall that $C(i,j)$ is the sum
of independent Bernoulli random variables, 
making it susceptible to Chernoff-Hoeffding bounds.
We have $\E(C(i,j)) < 2n - 4\sqrt{n}+1$,\footnote{Recall 
that in the modified construction of $A$ from \cref{lem:AB-to-graph},
$C(i,j)$ is the sum of $4n$ Bernoulli random variables, 
$r=O(c\log n)$ with expectation 1/2, 
and $4n-r$ with expectation at most $1/2-1/\sqrt{n}$,
so $\E(C(i,j)) < 2n-4\sqrt{n}+1$.}
which implies the probability of
$\kappa(x_i,z_j) \geq 4n-\sqrt{n}$ is at most 
$\exp(-2(3\sqrt{n})^2/4n) = \exp(-4.5) < 1/12$.
Thus, in expectation $\tilde{T}$ is still within 
the decoding radius of $T$.

For general values of $k$, we take $G$ to be
the disjoint union $G[T_1]\cup \cdots \cup G[T_{n/k}]$,
where $T_1,\ldots,T_{n/k}\in \mathcal{T}$
and each $G[T_i]$ has $\Theta(k)$ vertices.
There are $(2^{\Theta(k^2)})^{n/k}=2^{\Theta(kn)}$ choices
of $G$, and since we can decode $T_1,\ldots,T_{n/k}$ with
\emph{Threshold}/\emph{Equality}/\emph{Gap} queries,
the space for $\mathcal{D}(G,k)$ must be $\Omega(kn)$ bits.
\end{proof}

\begin{corollary}\label{cor:approximation-lb}
Fix any $\epsilon = \Omega(1/\sqrt{n})$.  
Suppose $\mathscr{D}(G,\epsilon)$ is a data structure 
for an undirected $n$-vertex graph $G$ that reports 
$\tilde{\kappa}(u,v) \in [(1-\epsilon)\kappa(u,v),(1+\epsilon)\kappa(u,v)]$.
Then $\mathscr{D}$ requires 
$\Omega(n\epsilon^{-2})$ bits of space.
\end{corollary}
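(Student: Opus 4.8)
The plan is to reuse the construction of \cref{thm:lb} essentially unchanged, observing that its Chernoff analysis already certifies a \emph{multiplicative} gap around the threshold, of relative width $\Theta(1/\sqrt n)$; so the right setting of parameters is $\epsilon=\Theta(1/\sqrt n)$. Concretely, I would go back to Part~3 of \cref{thm:lb}, where $\kappa(x_i,z_j)=\min\{C(i,j)+2(n-1),\,4n\}$ and, by Part~2, $C(i,j)$ stochastically dominates $\mathrm{Binom}(4n,\tfrac12+\tfrac1{\sqrt n})$ when $T(i,j)=1$ and is dominated by $\mathrm{Binom}(4n,\tfrac12-\tfrac1{\sqrt n})$ when $T(i,j)=0$. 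The same Chernoff/Hoeffding bound then shows that, for all but an $\exp(-\Omega(1))$-fraction of pairs $(i,j)$ (in expectation over the random $A,B$), we have $C(i,j)\ge 2n+2$ when $T(i,j)=1$, hence $\kappa(x_i,z_j)=4n$, and $C(i,j)\le 2n-\sqrt n$ when $T(i,j)=0$, hence $\kappa(x_i,z_j)\le 4n-\sqrt n-2$. Choosing the margins so this exceptional fraction is comfortably below $1/6$, and arguing as in Part~2, there is a specific choice of $A,B$ for which the set $\mathrm{Bad}$ of pairs violating these two bounds has $|\mathrm{Bad}|<n^2/6$.

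Next I would fix the threshold and gap. A direct calculation shows that one can pick $k=4n-\Theta(\sqrt n)$ (say $k=4n-\sqrt n/2-1$) and $\epsilon=\Theta(1/\sqrt n)$ with $(1+\epsilon)k<4n$ and $(1-\epsilon)k>4n-\sqrt n-2$ simultaneously. Then for every pair $(i,j)\notin\mathrm{Bad}$ the value $\kappa(x_i,z_j)$ lies strictly outside $[(1-\epsilon)k,(1+\epsilon)k]$, and it is on the ``$>(1+\epsilon)k$'' side precisely when $T(i,j)=1$. Hence querying $\mathscr D(G[T],k,\epsilon)$ on all pairs $(x_i,z_j)$ produces a matrix $\tilde T'$ with $\mathrm{Ham}(\tilde T',T)\le|\mathrm{Bad}|<n^2/6$; since distinct codewords of $\mathcal T$ are at Hamming distance $\ge n^2/3$, $\tilde T'$ determines $T$ uniquely, exactly as in \cref{cor:vertex-connectivity-lb}. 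Therefore $\mathscr D$ must store at least $\log_2|\mathcal T|=n^2/3$ bits for a graph on $N=(c+2)n$ vertices with relative error $\epsilon=\Theta(1/\sqrt n)=\Theta(1/\sqrt N)$, which is $\Omega(N\epsilon^{-2})$ bits.

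To obtain the bound for a prescribed relative error $\epsilon$ in the meaningful range (roughly $1/\sqrt n \lesssim \epsilon \lesssim 1$), I would take $t=\Theta(n\epsilon^2)$ vertex-disjoint copies of the single construction, each with internal parameter $\Theta(\epsilon^{-2})$ so that its gap parameter is exactly $\epsilon$ and it has $\Theta(\epsilon^{-2})$ vertices; the disjoint union has $\le n$ vertices, and all copies share the same $k$ and $\epsilon$. A cross-copy query returns $\kappa=0<(1-\epsilon)k$, so it never interferes with decoding, while within each copy the previous paragraph recovers the corresponding codeword. Thus $\mathscr D$ encodes $t\cdot\log_2|\mathcal T|=\Theta(n\epsilon^2)\cdot\Theta(\epsilon^{-4})=\Theta(n\epsilon^{-2})$ bits.

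The main obstacle is the one-sidedness forced by the $\min\{\cdot,4n\}$ cap: in the $T(i,j)=1$ case $\kappa$ is pinned to $4n$ rather than $4n+\Theta(\sqrt n)$, so the two cases are separated only by the asymmetric window $(4n-\Theta(\sqrt n),\,4n)$, and one must check that a single pair $(k,\epsilon)$ with $\epsilon=\Theta(1/\sqrt n)$ can be placed strictly inside it — together with verifying that the $\exp(-\Omega(1))$ Chernoff-failure fraction stays safely below the decoding radius $1/6$. Both are elementary, so no new idea beyond \cref{thm:lb} is needed; the work is entirely in re-reading its proof at the right granularity.
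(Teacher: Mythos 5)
Your proposal is correct and takes essentially the same route as the paper: the paper's entire proof is the one-line observation that the construction of \cref{thm:lb} and \cref{cor:vertex-connectivity-lb} still works with $k=\Theta(n)$ and $\epsilon=\Theta(1/\sqrt{n})$, which is exactly what your first two paragraphs make explicit (placing the promise interval inside the $\bigl(4n-\Theta(\sqrt n),\,4n\bigr)$ window and rechecking the Chernoff margins against the $n^2/6$ decoding radius). Your final paragraph, which pads with disjoint copies to cover a whole range of $\epsilon$, proves more than the stated ``for some $\epsilon$'' claim and simply mirrors the padding already used in \cref{cor:vertex-connectivity-lb}.
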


\begin{proof}
The input graph $G$ is the disjoint
union of $G[T_1]\cup\cdots \cup G[T_{n/k}]$,
where $T_1,\ldots,T_{n/k}\in \mathcal{T}$,
each $G[T_i]$ has $\Theta(k)$ vertices, 
and $k=\Theta(\epsilon^{-2})$.  
Note that when $u,v\in V(G[T_i])$,
a $(1+\epsilon)$-approximation $\tilde{\kappa}(u,v)$
can be used to answer a \emph{Gap} query with respect to $k$.
Thus, by \cref{thm:vertex-connectivity-lb},
$\mathscr{D}(G,\epsilon)$ also requires $\Omega(kn)=\Omega(\epsilon^{-2}n)$ bits of space.
\end{proof}

\section{Vertex Connectivity Oracles}\label{sect:vconn-oracle}

\cref{lem:relation_between_connectivities} 
says that for any terminal set $U$, 
$\kappa'_{G,U}(u,v)$ \emph{never} underestimates the true 
value $\kappa_G(u,v)$ and achieves the equality $\kappa'_{G,U}(u,v)=\kappa_G(u,v)$ if $U$ 
\emph{captures} $u, v$.
Following~\cite{ChuzhoyK12}, 
Izsak and Nutov's~\cite{IzsakN12} ingenious algorithm 
proceeds by sampling several terminal sets 
$U_1,U_2,\ldots \subseteq V$, 
including each vertex with probability $1/k$. 
Each terminal set includes both 
$u$ and $v$ with probability $1/k^2$ and \emph{avoids} 
a $u$-$v$ cut 
$C_{u,v}$ with constant probability if $|C_{u,v}| \leq k$,
i.e., each terminal set $U_i$ \emph{captures} $u,v$ with probability $\Theta(1/k^2)$,
hence $O(k^2\log n)$ terminal sets suffice 
to accurately capture \emph{all} 
vertex connectivities up to $k$, with high probability. 
The space for the centralized data structure is 
just that of storing $O(k^2\log n)$ Gomory-Hu trees and data structures for answering bottleneck-edge queries. 
Each tree is on $O(n/k)$ terminals in expectation, 
for a total of $O(kn\log n)$ space.
A query $\vconn(u,v)$ needs to examine all terminal sets
containing \emph{both} $u$ and $v$.  This is a classic \textsf{SetIntersection} query.  
Each of $u,v$ is in $\Theta(k\log n)$ sets, 
but are jointly in just $\Theta(\log n)$ sets.

In this section we show how to build random \textsf{SetIntersection} instances 
that are nonetheless \emph{structured}
so that queries can be answered optimally in 
time linear in the output size.
(If minimum cuts are associated with edges in the Gomory-Hu tree, a $\vcut(u,v)$ query can be answered in $O(1)$ additional time be returning a pointer to the appropriate 
cut.  This increases the space to $O(k^2n\log n)$.)

First, we show how to find $O(k^2\log n)$ terminal sets that capture all pairs in $V^2$ w.h.p. using a
construction based on affine planes and 
3-wise independent hash functions. 

\begin{lemma}\label{lem:generating_terminal_set}
Given an error parameter $\delta>0$ and a set $S\subseteq V$ of vertices, 
there is an algorithm using $O(\log n\log\delta^{-1})$ 
random bits that generates 
a terminal set family $\mathcal{U}$ with the following properties.
\begin{itemize}
    \item $|\mathcal{U}|=O(k^2\log \delta^{-1})$ and each $U\in\mathcal{U}$ has $|U|=O(|S|/k)$.
    \item Given vertices $u,v\in S$ 
    with $\kappa(u,v)\leq k$ 
    we can find $O(\log \delta^{-1})$ sets
    $U_1,\ldots,U_{O(\log \delta^{-1})}$ in $\mathcal{U}$ in
    $O(\log \delta^{-1})$ time, such that
    each, independently, 
    captures $u,v$ with constant probability.  
    As a consequence, 
    $\mathcal{U}$ captures all of $S^2$
    with probability $1-\delta|S|^2$.
\end{itemize}
\end{lemma}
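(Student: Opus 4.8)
The plan is to build $\mathcal{U}$ from $R=\Theta(\log n)$ independent \emph{rounds}, each round using an affine plane over $\mathbb{F}_q^2$, where $q$ is a prime power with $q=\Theta(k)$ chosen large enough that $k/q\le \tfrac14$ (such a $q$ exists by Bertrand's postulate). In round $r$ I would draw a $3$-wise independent hash $H_r\colon V\to\mathbb{F}_q^2$ from a standard family, which needs only $O(\log n)$ random bits, for $O(\log^2 n)$ bits total. For each of the $q^2+q=O(k^2)$ affine lines $\ell$ of $\mathbb{F}_q^2$ I put a terminal set $U_{r,\ell}$ into $\mathcal{U}$, obtained by taking, from each point $p\in\ell$, at most $\lceil c\cdot\max(1,n/q^2)\rceil$ of the vertices of $H_r^{-1}(p)$ (say those of smallest index), for a suitable constant $c$, and then, if necessary, truncating the whole set to at most $c'n/q$ vertices. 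This capping guarantees $|U_{r,\ell}|=O(n/q)=O(n/k)$ by construction, while $|\mathcal{U}|=R(q^2+q)=O(k^2\log n)$.

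Given a pair $u,v$, in round $r$ let $\ell_r$ be the unique affine line through $H_r(u)$ and $H_r(v)$ when these two points differ, and any fixed line through their common point otherwise; since $\ell_r$ is obtained from $H_r(u),H_r(v)$ by $O(1)$ field operations, the $R=O(\log n)$ candidate sets $U_r:=U_{r,\ell_r}$ are computed in $O(\log n)$ time. (This is precisely the structured \textsf{SetIntersection} instance exploited later: the sets containing both $u$ and $v$ are the preimages of the $O(1)$ lines through $H_r(u),H_r(v)$ over the $R$ rounds.)

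I claim each $U_r$ captures $u,v$ with probability at least $\tfrac12$. Fix a minimum mixed cut $C=C_{u,v}\subseteq E\cup(V-\{u,v\})$ with $|C|=\kappa(u,v)\le k$, which therefore has at most $k$ vertices, all distinct from $u$ and $v$. Capture needs three things: (i) no vertex of $C$ lies in $U_r$, (ii) $u\in U_r$, (iii) $v\in U_r$. For (i): conditioning on $H_r(u),H_r(v)$ fixes $\ell_r$ (a set of $q$ points), and by $3$-wise independence each $w\in C$ has $H_r(w)$ uniform on $\mathbb{F}_q^2$ independently of the conditioning, so $\Pr[H_r(w)\in\ell_r]\le q/q^2=1/q$; since $U_r\subseteq H_r^{-1}(\ell_r)$, a union bound over the $\le k$ vertices of $C$ bounds the failure probability by $k/q\le\tfrac14$. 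For (ii): $u$ fails to be in $U_r$ only if too many vertices collide with $u$ under $H_r$ or the whole set must be truncated; conditioning on $H_r(u)$, the number of colliding vertices has mean at most $1+n/q^2$ and, by (pairwise) independence, variance at most $n/q^2$, and the preimage $H_r^{-1}(\ell_r)$ has mean $n/q$ and variance at most $n/q$, so Chebyshev makes each of these bad events have probability at most $\tfrac18$ once $c,c'$ are large enough; the same applies to (iii). Hence $\Pr[U_r\text{ captures }u,v]\ge 1-\tfrac14-\tfrac18-\tfrac18=\tfrac12$. Because the rounds use independent randomness, the events ``$U_r$ captures $u,v$'' are mutually independent over $r\in[R]$, so the probability none of them captures $u,v$ is at most $2^{-R}=n^{-\Omega(1)}$ for a large enough constant in $R=\Theta(\log n)$, and a union bound over the at most $\binom n2$ pairs shows that w.h.p.\ every pair of $V^2$ is captured. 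Whenever $U$ captures $u,v$, \cref{lem:relation_between_connectivities} gives $\kappa'_{G,U}(u,v)=\kappa_G(u,v)$, which is exactly the property the Gomory-Hu machinery downstream consumes.

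I expect the genuinely delicate point to be the per-set size bound $|U|=O(n/k)$ rather than the capture probability. The two caps make the size bound hold by construction, but the tuning is what requires care: the per-point and whole-set caps must be generous enough that the inclusion events (ii) and (iii) keep probability bounded away from $1$ \emph{simultaneously} across the entire range $1\le k\le n$ — in particular in the regime $q^2\gg n$ (i.e.\ $k\gtrsim\sqrt n$), where $\max(1,n/q^2)=1$ and a line's preimage, though of expected size $O(n/q)$, is only weakly concentrated under $3$-wise independence. It may be that the clean uniform bound $|U|=O(n/k)$ over all $O(k^2\log n)$ sets holds only up to a logarithmic factor in the extreme regime $k=\Theta(n)$ (where balls-in-bins forces some set to have size $\Theta(\log n/\log\log n)$), so this corner either needs a mild relaxation to $\tilde O(n/k)$ or a separate argument. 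Everything else is a direct consequence of two elementary facts about the affine plane — any two points lie on a common line, and any line contains a $1/q$ fraction of the points — together with the definition of $3$-wise independence.
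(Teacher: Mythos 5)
Your capture analysis is essentially the paper's: hash $V$ into a $\Theta(k)\times\Theta(n/k)$-sized point set with a 3-wise independent function, use the line through $h(u),h(v)$, and union-bound over the at most $k$ vertices of a fixed minimum mixed cut to get constant capture probability per round, with $O(\log n)$ independent rounds and $O(\log^2 n)$ random bits. Where you diverge is the load-balancing condition $|U|=O(n/k)$, and this is exactly where your argument has a gap. You enforce the size bound by per-point capping and whole-set truncation, and then must show $u$ and $v$ survive; for the truncation event you bound the size of $H_r^{-1}(\ell_r)$ by Chebyshev, claiming variance at most $n/q$ ``by (pairwise) independence.'' But $\ell_r$ is determined by $H_r(u)$ and $H_r(v)$, so this computation is conditioned on \emph{two} hash values, and a 3-wise independent family does not leave the remaining values pairwise independent after such conditioning (for the standard quadratic family, fixing two evaluations leaves one degree of freedom, so the other values are deterministically coupled). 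As written that step fails; it is repairable either by moving to 4-wise independence (still $O(\log n)$ bits per round) or, more simply, by replacing Chebyshev with Markov, since the conditional \emph{mean} of $|H_r^{-1}(\ell_r)|$ is $O(n/q)$ using only 3-wise independence. Your closing worry about the regime $k\gtrsim\sqrt n$ is symptomatic of the same soft spot.

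The paper avoids all of this with one structural trick you are missing: it takes the hash to be a uniformly random polynomial $h(x)=(ax^2+bx+c)\bmod p_0$ with $p_0>n$ prime, mapped onto a $p\times\lceil p_0/p\rceil$ grid with $p=\Theta(k)$. Such an $h$ is simultaneously 3-wise independent (giving the same capture bound you prove) and, whenever $(a,b)\neq(0,0)$, at most 2-to-1, so every line's preimage has size at most $2\lceil p_0/p\rceil=O(n/k)$ \emph{deterministically}. No capping or truncation is needed, $u$ and $v$ are never at risk of being excluded from their set, and the only extra failure mode is the $O(k/n)$-probability event that $h(u),h(v)$ do not determine a unique line. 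If you adopt that hash family (or patch your argument with Markov/4-wise independence as above), your proof goes through for the full range $1\le k\le n$ with the clean $O(n/k)$ bound, not just $\tilde O(n/k)$.
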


\begin{proof}
We identify vertices in $S$ with the integers $[|S|]=\{0,1,\ldots,|S|-1\}$.
If $\abs{S} = O(k)$ then a trivial construction $\mathcal{U}=\{\{u, v\}: u, v \in S\}$ satisfies all the requirements. 
We assume in the following that  $\abs{S} \ge 100k$.

Let $p_0$ be the first prime larger than $|S|$
and $p$ be the first prime larger than $2k$. 
Let $H = (P,L)$ be 
a subset of an affine plane, defined as follows.
$P=\{u_{i,j} \mid i\in [\ceil{p_0/p}], j\in [p]\}$ is a set of points
arranged in a rectangular grid
and $L = \{\ell_{s,j} \mid s,j\in [p]\}$ is a set of lines, where
$\ell_{s,j} = \{u_{t,j+st \mod p} \mid t\in [\ceil{p_0/p}]\}$ is the
line with slope $s$ passing through $u_{0,j}$.
Choose a hash function $\Bar{h}$ from a 3-wise independent family:
\[
\Bar{h}(x)=(ax^2 + bx + c) \mod p_0,
\]
where $a,b,c$ are chosen uniformly at random from $[p_0]$. 
The mapping $h : S\rightarrow P$ from $S$ to points is given by:
\[
h(x) = u_{i,j}, 
\text{ where } i = \floor{\Bar{h}(x) / p} 
\text{ and } 
j = \Bar{h}(x)\mod p.
\]
Form the $p^2=\Theta(k^2)$ 
terminal sets $\mathcal{U}[h] = \{U_{s,j} \mid s,j\in [p]\}$ as follows.
\[
U_{s,j} = \{v \in S \mid h(v) \in \ell_{s,j}\}.
\]
Now fix any two vertices $u,v\in S$ 
whose minimum cut $C$ has size $\kappa(u,v)\leq k$.  
Then one set in 
$\mathcal{U}[h]$ will capture $u,v$ if
(i) $h(u),h(v)$ differ in their first coordinates of the rectangular grid $P$, and, assuming this happens,
(ii)  $C\cap U_{s,j} = \emptyset$ where $\ell_{s,j}$ is the unique line containing $h(u),h(v)$.
The probability of (i) 
is $1-p/p_0 \ge 9/10$ and the probability of (ii) is
\begin{align*}
  \Pr(C\cap U_{s,j}=\emptyset \mid h(u),h(v)) 
  &\geq 1 - \sum_{x\in C\cap S} \Pr(h(x) \in \ell_{s,j} \mid h(u),h(v))\\
  &\geq  1 - k\ceil{p_0/p}/p_0 \geq 0.49,
\end{align*}
where the first inequality is by the union bound and the 
second inequality follows from 
$|\ell_{s,j}| \le \ceil{p_0/p}$, $|C| \le k$.
We sample $h$ from a $3$-wise 
independent hash family
since we need $h(x)$ to be uniformly random after 
conditioning on the values of 
$h(u)$ and $h(v)$.

Let $\mathcal{U}$ be the union of $\mathcal{U}[h_1],\ldots,\mathcal{U}[h_{O(\log \delta^{-1})}]$ using independent hash functions $h_1,\ldots,h_{O(\log \delta^{-1})}$.
Then it follows that $|\mathcal{U}| = O(p^2\log\delta^{-1}) = O(k^2\log\delta^{-1})$.
Suppose we preprocess a table of inverses modulo $p$.
Given $u,v$ with $\kappa(u, v)\leq k$,
we can clearly identify whether points $h_i(u), h_i(v)$ differ in their first coordinates, 
and if so, identify $\ell_{s,j}$ (and hence $U_{s, j}\in \mathcal{U}[h_i]$) containing them 
both in $O(1)$ time.
Therefore, we can identify 
$O(\log\delta^{-1})$ members of $\mathcal{U}$
in $O(\log \delta^{-1})$ time such that at least one 
of these terminal sets captures $u,v$ with probability 
$1-\delta^{-1}$.

We now turn to the load balancing condition $|U|=O(\abs{S}/k)$.
Note that when $a\neq 0$, $h$ is a 2-to-1 function
as the polynomial defining $h$ has degree 2.
When $a=0,b\neq 0$, $h$ is 1-to-1.
Thus, whenever $(a,b,c)\neq (0,0,c)$,
every $U\in \mathcal{U}$ has size 
$|U|<2\ceil{p_0/p} = O(\abs{S}/k)$.  
The performance of the algorithm 
is clearly quite bad when $a=b=0$ 
as each terminal set $U$ is either $\emptyset$ or $S$,
so we can remove these hash functions from the hash 
family and only improve the probability of success.
\end{proof}

\cref{thm:vconn-oracle} 
is obtained by applying \cref{lem:generating_terminal_set}
with $S=V$ to the Izsak-Nutov oracle~\cite{IzsakN12}.
The claimed construction time will be substantiated
in \cref{sec:GH_tree_construction}.

\begin{theorem}\label{thm:vconn-oracle}
Let $G=(V,E)$ be an undirected graph, $k\in[1,n]$,
and $\bar{m}=\min\{m,(k+1)n\}$.
A data structure with size $O(kn\log n)$ can be constructed
in $O(m)+O(k^3\log^7n)\cdot T_{\flow}(\bar{m}) = O(m+k^3\bar{m}^{1+o(1)})$ time such that
a $\vconn(u,v)=\min\{\kappa_G(u,v),k+1\}$ query
can be answered in $O(\log n)$ time, which is correct with high probability.
Using space $O(k^2n\log n)$, a $\vcut(u,v)$ query can be answered in $O(1)$ \emph{additional} time; 
it returns a pointer to a $\kappa_G(u,v)$-size 
$u$-$v$ cut whenever $\kappa_G(u,v)\leq k$.
\end{theorem}

In \cref{thm:vconn-oracle}, 
when $m \gg kn$, the Nagamochi-Ibaraki~\cite{NagamochiI92} 
algorithm reduces the number of edges 
from $m$ to $\bar{m}=(k+1)n$ so that all $k'$-cuts, 
$k'\leq k$, are the same 
in the original and reduced graph.
It runs in $O(m)$ time.
On the other hand, when $m\ll kn$, the output 
of \cref{thm:vconn-oracle} is \emph{still} 
$\Theta(kn\log n)$, i.e., it does not benefit 
from the \emph{sparseness} of the input.
\cref{thm:vconn-oracle-sparse} gives a smaller
vertex connectivity oracle when the input graph is sparse.  It is obtained by applying \cref{lem:generating_terminal_set} to several
terminal sets.

\begin{theorem}\label{thm:vconn-oracle-sparse}
Let $G=(V,E)$ be an undirected graph, $k\in[1,n]$,
and $\bar{m}=\min\{m,(k+1)n\}$.
A data structure with size $O(m\log k\log n)$ can be constructed
in $O(m)+O(k^3\log^7n)\cdot T_{\flow}(\bar{m}) = O(m+k^3\bar{m}^{1+o(1)})$ 
time such that a $\vconn(u,v)=\min\{\kappa_G(u,v),k+1\}$ query
can be answered in $O(\log n)$ time, which is correct with high probability.
Using space $O(k^2n\log n)$, a $\vcut(u,v)$ query can be answered in $O(1)$ \emph{additional} time; it returns a pointer to a $\kappa_G(u,v)$-size 
$u$-$v$ cut whenever $\kappa_G(u,v)\leq k$.
\end{theorem}

\begin{proof}
Let $S_g \subseteq V$ be the set of all $u\in V$ with 
$\deg(u) \geq 2^g$.  Clearly $|S_g|=O(m/2^g)$.
Using \cref{lem:generating_terminal_set} with $S=S_g$ and $k'=2^{g+1}$, 
we can build a data structure with size $O(k'|S_g|\log n) = O(m\log n)$
that answers $\min\{\kappa(u,v),k'+1\}$-queries for any $u,v\in S_g$, 
in $O(\log n)$ time, that are correct with high probability.
We build such a data structure for each $g\in [0,\floor{\log_2 k}]$.  
Given a query pair $u,v$ we calculate 
$g=\min\{\floor{\log_2 k}, \floor{\log_2(\min\{\deg(u),\deg(v)\})}\}$
and query the data structure on $S_g$.  
It returns $\min\{\kappa(u,v), 2^{g+1}+1\}$ with high probability.
When $g<\log_2 k$ this is exactly $\kappa(u,v)$, 
since $\kappa(u,v)\leq \min\{\deg(u),\deg(v)\} < 2^{g+1}$.
When $g=\log_2 k$ this is just as good as a 
$\min\{\kappa(u,v), k+1\}$ query.
\end{proof}

\section{Gomory-Hu Trees for Element Connectivity}\label{sec:GH_tree_construction}

The goal in this section is to prove the following:

\begin{theorem}\label{thm:construct_k_gh_tree}
A $k$-Gomory-Hu tree for element connectivity w.r.t.~an $m$-edge graph $G$ and terminal set $U$ can be constructed
in $O(k\log^6n)\cdot T_{\flow}(m)$ time.
\end{theorem}

Observe that the construction time claims of \Cref{thm:vconn-oracle} and \Cref{thm:vconn-oracle-sparse} 
follow from~\Cref{thm:construct_k_gh_tree}
because there are $O(k^2 \log n)$ terminal sets,
each of size $|U|=O(n/k)$.

\paragraph{Obstacles to Adapting Algorithms of \cite{Li2021Approximate} for Element Connectivity.}
The proof of \Cref{thm:construct_k_gh_tree} is obtained by adapting the Gomory-Hu tree construction for \emph{edge connectivity} by Li and Panigrahi \cite{Li2021Approximate} to work for \emph{element connectivity}. 
Although we use the same high-level approach, element connectivity introduces some extra complications that we need to deal with.

For example, given an input graph $G$, all Gomory-Hu tree algorithms for edge connectivity proceed by finding a minimum edge cut $(A,B)$, contracting one side, say $B$, of the cut into a single vertex $b$, and recursing 
on the contracted graph, denoted $G'$. 
By submodularity of edge cuts, we have that the edge connectivity between any two vertices $a_1,a_2\in A$ are preserved in $G'$. This is crucial for the correctness of the whole algorithm.

Unfortunately, the direct analogue of this statement fails for element connectivity.
For example, suppose $p, q\notin B$ are disconnected by a mixed cut $C$ of graph $G$. Then in graph $G'$, $C'=\{b\}\cup (C- B)$ becomes a mixed cut disconnecting $p$ and $q$. As long as $C$ contains more than one element in $B$ (an edge, or a non-terminal vertex), $\abs{C'} < \abs{C}$, 
so $\kappa'(p, q)$ is \emph{smaller} in the contracted graph $G'$.

Our solution is to exploit the generality of element connectivity. 
When we recurse on a contracted graph, 
we add the contracted node as a \emph{de facto} terminal, i.e., it cannot appear in any mixed cut. It will be shown in \cref{lem:elem_conn_preserved}
and \cref{lem: elem_conn_preserved_for_small_parts} that the recursion becomes valid with this modification. However, unlike normal terminals, 
we cannot try to support connectivity queries on these new terminals,
so they will be passed along in the recursion as a \emph{forbidden set}
$F$.  This is the motivation for \cref{lem:isolating_cut_lemma_with_forbidden_terminals}.

In the rest of this section, we formally prove \Cref{thm:construct_k_gh_tree}.

\subsection{Approximate Gomory-Hu Trees for 
Element Connectivity}
\label{subsec:algorithm-gh-tree-description}

Instead of proving \Cref{thm:construct_k_gh_tree} directly, 
it is more convenient to prove the following 
$(1+\epsilon_0)$-approximation analogue of \cite{Li2021Approximate} for element connectivity.  By setting $\epsilon_0 = 1/(k+1)$, 
the resulting Gomory-Hu tree is exact for element
connectivities up to $k$.

\begin{theorem}\label{thm:construct_approx_gh_tree}
For $0 < \epsilon_0 < 1$, 
a $(1+\epsilon_0)$-approximate Gomory-Hu tree for element connectivity w.r.t. graph $G$ and terminal set $U$ can be computed in $O(\epsilon_0^{-1}\log^6n)\cdot T_{\flow}(m)$ time, where $T_{\flow}(m)$ is the time to compute max flow 
in an $m$-edge graph.
\end{theorem}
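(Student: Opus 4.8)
The plan is to mirror the recursive Gomory--Hu construction of Li--Panigrahi~\cite{Li2021Approximate} for edge connectivity, but carried out in the element-connectivity setting, where the terminal set is allowed to grow during the recursion. The high-level recursion is: pick a pivot terminal $p\in U$ (ideally one minimizing, or approximately minimizing, $\kappa'_{G,U}(p,\cdot)$ in a useful sense), use the isolating-cuts machinery of Lemma~\ref{lem:isolating_cut_lemma_with_forbidden_terminals} to compute, with $O(\log|U|)$ max-flow calls, a collection of ``isolating'' element cuts that partition the other terminals into groups each of which is ``on the far side'' of a small cut from $p$; contract each far side and recurse. Because the recursion, when set up with the Li--Panigrahi ``min-cut of at least half the terminals is found'' guarantee (or its $(1+\epsilon)$-approximate relaxation via weight rounding), has depth $O(\epsilon^{-1}\log n)$ with total max-flow work $\tilde O(\epsilon^{-1})$ per level, we get the claimed $\tilde O(\epsilon^{-1}T_{\flow}(m))$ bound, provided each recursive max-flow instance stays on an $O(m)$-edge graph.

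The first key step is the element-connectivity analogue of the standard submodularity/uncrossing fact: if $(A,B)$ is the side partition induced by one of the isolating element cuts $C$ separating $p$ from some terminal $q$, and we form $G'$ by contracting $B$ (together with $C\cap B$) into a single new vertex $b$, then adding $b$ to the terminal set — i.e.\ recursing on $(G', U' )$ with $U' = (U\cap A)\cup\{b\}$ — preserves $\kappa'_{G,U}$ between any two terminals remaining in $A$, and moreover the cut $C$ itself survives as an element cut of the same size in $G'$ (now with $b$ possibly playing the role of one unit of the cut, but since $b$ is a terminal it contributes $0$ to element-cut size, and the edge-ends of $C$ crossing into $B$ get redirected to $b$). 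This is exactly the ``trick'' flagged in the paragraph before the theorem: making the contracted node a terminal is what repairs the failure of naive contraction noted with the $p,q$ example. I would prove this by the usual uncrossing argument for the symmetric submodular-like function $A\mapsto$ (number of edges leaving $A$) $+$ (number of non-terminal vertices in $\partial A$), checking that posedge-redirection to a terminal vertex does not increase any cut.

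The second key step is the quantitative recursion bookkeeping: one must argue that (a) the isolating-cuts lemma can be invoked with the pivot $p$ as the isolated vertex and the remaining terminals as $I$, returning for each a minimal minimizer side $S_v$, and that the standard Li--Panigrahi argument (Gomory--Hu ``step'' via a single minimum Steiner cut found approximately, here via the approximate element min-cut subroutine on the auxiliary instance) guarantees that a constant fraction — or in the approximate version, a suitable fraction after rounding edge weights to powers of $(1+\epsilon')$ — of terminals are split off at each recursion node, giving depth $O(\epsilon^{-1}\log n)$; (b) contractions never blow up the edge count, since each contraction only merges vertices and merges parallel edges (or, if we keep multigraphs, the total edge count across all recursion nodes at a fixed level is $O(m)$), so every max-flow call runs in $T_{\flow}(O(m))$ time; and (c) the outputs of the recursive calls are glued into a single weighted tree $T$ with the map $f$ by identifying the new terminal $b$ in the child with the appropriate node, and the edge weights/associated cuts $C$ satisfy the approximate flow- and cut-equivalency of Definition~\ref{def:elem_conn_gh_tree} by an induction on recursion depth. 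I expect the main obstacle to be step one done carefully: verifying that \emph{all} the uncrossing and contraction-invariance statements of~\cite{Li2021Approximate} go through for element cuts with the ``promote the contracted node to a terminal'' modification, in particular that simultaneously contracting the many far sides $\{S_v\}$ produced by one isolating-cuts call (not just one $(A,B)$ at a time) is still valid — this requires that these sides are essentially disjoint (which the isolating-cuts lemma guarantees up to the shared boundary behavior) and that the promoted terminals do not interfere with each other's cuts. Once that structural lemma is in hand, the timing analysis is a routine transcription of the edge-connectivity argument, and Theorem~\ref{cor:construct_k_gh_tree} then follows from Theorem~\ref{thm:construct_approx_gh_tree} by the standard reduction (set $\epsilon$ small enough, round, and truncate the tree at weight $k$, replacing subtrees of high connectivity by the $f$-map as in Definition~\ref{def:elem_conn_gh_tree}).
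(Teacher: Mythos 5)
Your high-level plan is indeed the paper's: adapt the Li--Panigrahi recursion to element connectivity using the isolating-cuts lemma with forbidden terminals, and repair the failure of contraction by promoting each contracted blob to a terminal (in the paper this is the growing ``forbidden'' terminal set $F$). However, there is a genuine gap in where you locate the approximation and in what you claim is preserved exactly. The statement that contraction-with-terminal-promotion ``preserves $\kappa'_{G,U}$ between any two terminals remaining in $A$'' is only true for one of the two kinds of recursive calls: the pivot-side graph obtained by contracting the (minimal) isolating sides $S_v$, and even there the proof needs both submodularity and the fact that $S_v$ is the \emph{smallest} minimum isolating cut, so that $S_v$ uncrosses into one side of any minimum $p$--$q$ cut. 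For the recursion \emph{inside} a group one must contract the entire remainder of the graph into a single terminal, and there exact preservation genuinely fails: the element connectivity of pairs inside $S_v$ can strictly increase. The paper controls this increase only because the isolating cuts are filtered by a threshold $W=(1+\epsilon)\lambda$, where $\lambda$ is (a maintained lower bound on) the current global minimum element connectivity among the terminals; submodularity then gives $\kappa'_{G_v}(p,q)\le \kappa'_G(p,q)+\epsilon\lambda\le(1+\epsilon)\kappa'_G(p,q)$. Your proposal never introduces $\lambda$ or the threshold, and attributes the $(1+\epsilon)$ loss to rounding edge weights to powers of $(1+\epsilon')$, which is not a substitute: without the threshold there is no bound at all on how much the small-side contraction distorts connectivity, and with exact preservation on both branches you would be claiming an exact Gomory--Hu tree, which the argument cannot deliver.

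The quantitative bookkeeping is also off in a way tied to the same omission. The recursion does not split off a constant fraction of terminals per node, and the depth is not $O(\epsilon^{-1}\log n)$ by that route. The actual argument is a potential over pairs $(u,v)$ whose minimum element cut is below the threshold and has its small side on $u$'s end: one large-branch step removes an expected $\Omega(1/\log^2|U|)$ fraction of such pairs, so after $\Theta(\log^3 n)$ consecutive large-branch steps the potential is empty w.h.p., the global minimum element connectivity has risen by a $(1+\epsilon)$ factor, and $\lambda$ is updated; since $\lambda$ can rise at most $O(\epsilon^{-1}\log n)$ times, the large-branch depth is $\tilde O(\epsilon^{-1})$, while the small-branch depth is $\log|U|$ because each $S_v$ contains at most half the terminals. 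Finally, the $(1+\epsilon)$ losses compound across the $\log|U|$ small-branch levels, so one must run the algorithm with $\epsilon/\log|U|$ to get a $(1+\epsilon)$-approximate tree; this is absorbed into the $\tilde O(\epsilon^{-1}T_{\flow}(m))$ bound but has to be said. With the threshold mechanism, the exact-versus-approximate asymmetry of the two contractions, and this potential argument added, your outline matches the paper's proof.
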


Before we give the proof of \cref{thm:construct_approx_gh_tree}, 
observe that as a simplifying assumption, \cref{lem:isolating_cut_lemma_with_forbidden_terminals} 
(isolating cuts with forbidden terminals)
required that $I\cup F$ be an independent set.
We can force any instance to satisfy this property
by subdividing all edges in $E\cap {I\cup F\choose 2}$.
As a consequence, throughout this section, we assume that all mixed cuts
in the modified graph consist solely of (non-terminal) vertices.\footnote{
\label{footnote:solely-vertex-assumption}
If $\{x,y\}$
is subdivided into 
$\{x,v_{x,y}\},
\{v_{x,y},y\}$, 
then any vertex-only mixed cut
containing $v_{x,y}$ in the modified instance contains edge $\{x,y\}$ in the original instance, and vice-versa.} This introduces at most 
$\abs{E\cap \binom{U}{2}} = O(m)$ new edges 
and does not affect the scale of the running time.

The algorithm for \cref{thm:construct_approx_gh_tree} is 
\textsc{ApproxElemConnGHTree} (\cref{algo:construct_gh_tree}).
It takes a graph $G$, terminal set $U\subseteq V(G)$,
forbidden set $F\subseteq V(G)$, and approximation parameter $\epsilon$.  
It returns a $(1+\epsilon)^{\log \abs{U}}$-approximate Gomory-Hu tree $(T,g,C)$ for element connectivity, 
where $g:(U\cup F)\rightarrow V(T)$, and $T$ 
correctly represents minimum vertex cut sizes 
avoiding $U\cup F$, for any two terminals in $U$.
$C$ maintains the cuts associated with the tree $T$ to support $\vcut$ queries. 

The algorithm is recursive.
We ensure $\lambda$ is a lower bound on the minimum global 
element connectivity by initializing it at $1$ and increasing it 
whenever we are confident it has changed.\footnote{Specifically, if 
$\alpha$ and $\beta$ are two calls 
in the recursion tree, and the path from $\alpha$ to $\beta$ includes
$\Delta = \Theta(\log^3n)$ calls
in the ``$G_{\LG}$'' branch of the recursion, then 
we are confident that
the minimum element cut has increased
by at least a $1+\epsilon$ factor from $\alpha$ to $\beta$, w.h.p.
Thus, we will set $\lambda \gets (1+\epsilon)\lambda$ whenever
the $G_{\LG}$-depth of the current recursive call is a multiple of $\Delta$. See \cref{subsec:proof_of_thm_51} for more detail.}
Let $R^j\subset U$ be a random sample of the terminals,
and $\{S_v^j\}_{v\in R^j}$ 
be such that $\partial S_v^j$ is a minimum
cut avoiding $U\cup F$ and isolating
$v$ from $R^j-\{v\}$.
If
\begin{align}
|\partial S_v^j| &\leq (1+\epsilon)\lambda\label{eqn:onepluseps}
\end{align}
then
by definition of $\lambda$, $\partial S_v^j$ must be 
a $(1+\epsilon)$-approximate minimum cut 
for any terminals $u\in S_v^j$, $u'\not\in S_v^j\cup \partial S_v^j$.  Thus, we can split our problem into 
multiple contracted subproblems by recursively 
considering only terminals on one side of the cut.
In order to bound the \emph{depth} of this recursion
it is also desirable that
\begin{align}
    |U\cap S_v^j| &\leq |U|/2.\label{eqn:geometric-shrink}
\end{align}
Let $R_{\SM}^j \subseteq R^j$ be the subset of sampled
terminals that satisfy the desirable features (\ref{eqn:onepluseps}) and (\ref{eqn:geometric-shrink}).
The procedure \textsc{CutThresholdStep} 
(Line~\ref{line:CutThresholdStep} of \textsc{ApproxElemConnGHTree})
computes 
many sampled sets $R^0,\ldots,R^{\log|U|}$ with geometrically decreasing probabilities.  
However, 
only the set $R^i$ maximizing $U\cap \left(\bigcup_{v\in R_{\SM}^i} S_v^i\right)$ is actually used by \textsc{ApproxElemConnGHTree}.  
(For technical reasons, 
\textsc{CutThresholdStep} selects
one ``source'' terminal $s\in U$ 
to appear in all $R^i$ sets, 
which is specifically excluded from every $R^i_{\SM}$.)

Once $R^i,R_{\SM}^i$ and $\{S_v^i\}_{v\in R^i}$ are fixed, 
\textsc{ApproxElemConnGHTree}
proceeds to call itself recursively on
each $G_v$, $v\in R_{\SM}^i$, 
which is derived from $G$ by contracting 
$V(G) - (S_v^i\cup \partial S_v^i)$ 
to a single $F$-vertex 
$x_v$ (Line~\ref{line:rec-call}).
The terminal set in this recursive call is 
naturally $U_v = U\cap S_v^i$.
The remaining graph, which we call $G_{\LG}$,
is obtained by contracting each $S_v^i$, $v\in R_{\SM}^i$, to a single $F$-vertex $y_v$.  
A recursive call is also made on $G_{\LG}$ (Line~\ref{line:LG-rec-call}). Notice that the newly introduced $F$-vertices are not adjacent to any existing old $U\cup F$ vertices. 
% {\color{green} They are the junctions to splice the trees of the subcases.
% \color{red} {\bf SP: what does this sentence mean?}} {\color{blue} LY: I want to introduce the properties of $x_v$ and $y_v$ nodes that, they are used to form the final tree from the trees of the recursive subcases, and in the way that the trees are exactly connected at the tree nodes that $x_v$ and $y_v$ are embedded. But it seems that the sentence failed to express this meaning. I think it is ok to remove this sentence.}

The output $(1+\epsilon)$-approximate 
Gomory-Hu tree is obtained by combining 
the approximate 
Gomory-Hu trees of $\{G_v\}_{v\in R_{\SM}^i}$ 
and $G_{\LG}$, called $(T_v, g_v, C_v)$ and $(T_{\LG},g_{\LG},C_{\LG})$.
In particular, we join the vertices 
$g_v(x_v)$ and $g_{\LG}(y_v)$ by an edge 
and identify it with the cut $\partial S_v^i$.
Note that \emph{all} vertices in 
$U\cup F \cup \{x_v,y_v\}_{v\in R_{\SM}^i}$ 
are embedded into some $T_v$ or $T_{\LG}$.
(The base case embedding 
is handled in Line~\ref{line:basecase},
when $|U|=1$, and the general inductive case in
Line~\ref{line:combine-GH-trees}.)

\medskip 

The correctness of \textsc{ApproxElemConnGHTree} 
will be proved in \cref{sect:GH-correctness}
and the running time analysis in
\cref{subsubsec:run_time_analysis}.
The correctness rests on two claims: 
that for any two 
terminals $u,u'\in U_{\LG} = U\cap V(G_{\LG})$, 
their element connectivity w.r.t.~$G,U\cup F$ 
is preserved \emph{exactly} in $G_{\LG}, U_{\LG}\cup F_{\LG}$ (\cref{lem:elem_conn_preserved}), 
and that for any two $u,u'\in U_v = U\cap V(G_v)$,
their element connectivity w.r.t.~$G_v,U_v\cup F_v$ is at most 
$1+\epsilon$ times their element connectivity w.r.t.~$G,U\cup F$ (\cref{lem: elem_conn_preserved_for_small_parts}).
Since this latter case 
can only occur $\log|U|$ times
for two $u,u'\in U$, the final tree is a
{$(1+\epsilon)^{\log|U|}$-approximate} Gomory-Hu tree. By picking $\epsilon = \Theta(\epsilon_0/\log n)$ the approximation ratio becomes $(1+\epsilon)^{\log\abs{U}}\leq 1 + \epsilon_0$.

To summarize, to compute a
$(1+\epsilon_0)$-approximate Gomory-Hu tree for terminal set $U$, we make the initial 
call to \textsc{ApproxElemConnGHTree}  (\cref{algo:construct_gh_tree}) 
with $F=\emptyset$, $\epsilon = \epsilon_0/(2\log \abs{U})$ and $\lambda = 1$. The output $(T, g, C)$ matches with \cref{def:elem_conn_gh_tree} where $f = g$. If only $\vconn$ queries are required, the data structure as described in \cref{sect:vconn-oracle} only stores $(T, g)$.

\begin{algorithm}
\caption{\textsc{ApproxElemConnGHTree}$(G(V, E), U, F, \epsilon, \lambda)$}\label{algo:construct_gh_tree}
\SetKwFunction{ApproxElemConnGHTree}{ApproxElemConnGHTree}
\SetKwFunction{Combine}{Combine}
\SetKwFunction{CutThresholdStep}{CutThresholdStep} %
\SetKwInOut{Input}{input}\SetKwInOut{Output}{output}
\DontPrintSemicolon
\Input{A graph $G=(V, E)$, a terminal set $U\subseteq V$, a forbidden set $F\subseteq V$, an accuracy parameter $\epsilon$, and global element connectivity lower bound $\lambda$.}
\Output{An approximate element-connectivity Gomory-Hu tree $(T, {\algoF}, C)$.  For $u,v\in U$, $T$ encodes a $(1+\epsilon)^{\log\abs{U}}$-approximation of $\kappa'_{G,U\cup F}(u,v)$ and a corresponding cut.}
\If(\tcp*[f]{The Base Case}){$\abs{U}=1$}{\label{line:basecase}
Construct $T$ with one node $t$, $\algoF(u)\gets t$ for all $u\in U\cup F$, and $C$ is vacuous.\;
\Return $(T, \algoF, C)$
}
\If{\emph{this recursive call 
has been in the ``$G_{\LG}$ branch'' for 
$\Delta$ consecutive calls}}{
$\lambda\gets (1+\epsilon)\lambda$\tcp*{$\Delta = \Theta(\log^3n)$; see \cref{subsec:proof_of_thm_51}}
}
Call \CutThresholdStep{$G, U, F, (1+\epsilon)\lambda$} and store its output $s,\{R_{\SM}^j,R^j,\{S_v^j\}_{v\in R_{\SM}^j}\}_j$\;\label{line:CutThresholdStep}
Fix $i\in\{0, 1, \cdots, \lowerint{\log\abs{U}}\}$ that maximizes $\abs{\cup_{v\in R_{\SM}^i}(S_v^i\cap U)}$.\;

\ForEach{$v\in R_{\SM}^i$}{
    $G_{v}\gets$ the graph derived from $G$ 
    with $V- (S_{v}^i\cup \partial S_v^i)$ contracted to $x_{v}$.\; 
    $U_{v}\gets S_v^i\cap U$.\;
    $(T_{v}, {\algoF}_{v}, C_{v})\gets$ \ApproxElemConnGHTree{$G_v, U_v, F\cup\{x_v\}, \epsilon, \lambda$}.\;\label{line:rec-call}
}

$G_{\LG}\gets$ the graph $G$ with each 
$S_{v}^i$ contracted to $y_{v}$, 
for each $v\in R_{\SM}^i$.\;
$U_{\LG}\gets U - \cup_{v\in R_{\SM}^i}(S_v^i\cap U)$.\;
$(T_{\LG}, {\algoF}_{\LG}, C_{\LG})\gets$ \ApproxElemConnGHTree{$G_{\LG}, U_{\LG}, F\cup\{y_v\mid v\in R_{\SM}^i\}, \epsilon, \lambda$}.\;\label{line:LG-rec-call}

Construct $T$ from $T_{\LG}$ and $\{T_v\}_{v\in R_{\SM}^i}$ as follows. 
Add edges $({\algoF}_{v}(x_{v}),{\algoF}_{\LG}(y_{v}))$ with weight $\abs{\partial S_{v}^i}$.
The embedding ${\algoF}$ inherits values from 
${\algoF}_{\LG}$ or $\{{\algoF}_{v}\}$. 
$C$ inherits the cuts 
of $C_{\LG}$ and $\{C_v\}_{v\in R_{\SM}^i}$ 
on existing edges, and for the new edges,
$C(({\algoF}_{v}(x_{v}),{\algoF}_{\LG}(y_{v}))) = \partial S_v^i$.\;\label{line:combine-GH-trees}

\Return $(T, {\algoF}, C)$.\;
\end{algorithm}

\begin{algorithm}
\caption{\textsc{CutThresholdStep}$(G=(V, E), U, F_{\operatorname{in}}, W)$}
\label{algo:cut_threshold_step}
\DontPrintSemicolon
Set $s\gets $ a uniformly random vertex in $U$.\;
$R^0\gets U$.\;
\For{$j$ from $0$ to $\lowerint{\log\abs{U}}$}{
    Call algorithm of \cref{lem:isolating_cut_lemma_with_forbidden_terminals} to compute sets $\{S^j_v: v\in R^j\}$, with $I=R^j$ and $F=(U - R^j)\cup F_{\operatorname{in}}$.\;
    Let $R_{\SM}^j\gets \{v\in R^j-\{s\}: \abs{S_v^j\cap U}\leq \abs{U}/2 \text{ and } \abs{\partial S_v^j} \leq W\}$.\;
    $R^{j+1}\gets $ sample each vertex of $R^j$ with probability $\frac{1}{2}$, but $s$ with probability 1.\;
}
\Return $s$ and $\{R^j, R_{\SM}^j$, \{$S^j_v\}_{v \in R^j_{\SM}}\}$, for each $j$.\;
\end{algorithm}

\subsection{Correctness}\label{sect:GH-correctness}

Let $F_v=F\cup\{x_v\}$
and 
$F_{\LG}=F\cup\{y_v \mid v\in R_{\SM}^i\}$ 
be the $F$-sets in the recursive calls 
at Lines~\ref{line:rec-call}
and~\ref{line:LG-rec-call}, respectively.

\begin{lemma}
\label{lem:elem_conn_preserved}
The element connectivity of any two 
$p, q\in U_{\LG}$ is preserved exactly in $G_{\LG}$, that is,
\[\kappa'_{G_{\LG}, U_{\LG}\cup F_{\LG}}(p, q) = \kappa'_{G, U\cup F}(p, q).\]
\end{lemma}
\begin{proof}
We first show that $\kappa'_{G_{\LG}, U_{\LG}\cup F_{\LG}}(p, q) \geq \kappa'_{G, U\cup F}(p, q)$.
Suppose $C$ is a minimum element cut disconnecting $p$ and $q$ in $G_{\LG}$ with terminal set $U_{\LG}\cup F_{\LG}$. Then $C$ does not contain any $y_v\in F_{\LG}$, so $C$ is still an element cut for $p$ and $q$ in $G$, and therefore $\kappa'_{G_{\LG}, U_{\LG}\cup F_{\LG}}(p, q) =\abs{C} \geq \kappa'_{G, U\cup F}(p, q)$.

It remains to show that $\kappa'_{G_{\LG}, U_{\LG}\cup F_{\LG}}(p, q) \leq \kappa'_{G, U\cup F}(p, q)$. 
Suppose $C$ is a minimum element cut disconnecting 
$p$ and $q$ in $G$ with terminal set $U\cup F$, 
and let $A,B$ be the sides containing $p,q$, respectively. 
We want to show that $C$ exists in $G_{\LG}$, that is,
it does not intersect with any of the $S_v^i$'s for $v\in R_{\SM}^i$.\footnote{Note that $i$ is one fixed value in $\{0, 1, \cdots, \lowerint{\log\abs{U}}\}$ in each single call of \cref{algo:construct_gh_tree}, while $v$ ranges over $R_{\SM}^i$.}
Fix any $v\in R_{\SM}^i$, and let $D_v$ be the side
of $C$ containing $v$.  Note that $C$ is disjoint
from $U\cup F$, and therefore cannot contain $v\in U$.

Without loss of generality we assume $v$ 
and $q$ are in different sides. 
All the element cuts introduced below 
are w.r.t.~$G$ and $U\cup F$. 
Define $H = D_v\cup A$. 
\cref{lem:partial-of-side-and-its-cut} implies that
$\partial D_v \subseteq C = \partial A$, and since vertices in $D_v$ and $A$ are not adjacent, we also have
$\partial H = \partial D_v\cup \partial A = C$.
Furthermore, we have that $\partial H$ is 
a \emph{minimum} element cut 
between $\{v, p\}$ and $\{q\}$.
Suppose $C'$ is the minimum element cut between $\{v, p\}$ and $\{q\}$.
Since $C'$ disconnects $p$ and $q$, $\abs{C'}\geq \abs{C}$, 
and since $C$ disconnects 
$\{q\}$ from $\{v, p\}$, 
$\abs{C'}\leq \abs{C}$. 
Combining them we conclude 
that $\abs{C'} = \abs{C}$.

Since $q\in U_{\LG}$, we know that
$q\notin S_v^i$.  
We also know $q\notin \partial H = C$, $q\notin \partial S_v^i$, and
$q\notin H\cup S_{v}^i$. 
Because $\{v, p\}\subseteq H\cup S_v^i$, we have that $\partial (H\cup S_v^i)$ is 
\emph{some} element cut between $\{v, p\}$ 
and $\{q\}$, 
so $\abs{\partial H} \leq \abs{\partial (H\cup S_v^i)}$. 
By construction, 
$\partial S_v^i$ is a minimum element cut between $\{v\}$ and $R^i-\{v\}$. 
Since $\partial (S_v^i\cap H)$ is 
also an element cut between $\{v\}$ and $R^i- \{v\}$, 
we have 
$\abs{\partial S_v^i} \leq \abs{\partial (H\cap S_v^i)}$. 
However, 
by \cref{lem:submodularity-of-partial} we have
\[
\abs{\partial H} +\abs{\partial S_v^i}\geq \abs{\partial (H\cap S_v^i)} + \abs{\partial (H\cup S_v^i)}.
\]
Thus, this inequality holds with equality, 
and in particular 
$\abs{\partial (H\cap S_v^i)}=\abs{\partial S_v^i}$. 
However, by \cref{lem:isolating_cut_lemma_with_forbidden_terminals},
$S_v^i$ also minimizes $|S_v^i|$ 
among all minimum isolating cuts, 
so $H\cap S_v^i= S_v^i$, 
that is, $S_v^i\subseteq H$.

If $D_v=A$ then $S_v^i\subseteq D_v$.  
If $D_v\neq A$, then in $G-C$, 
$D_v$ is not connected to $A$, while all of $S_v^i$ is connected to $v$
(by \cref{lem:isolating_cut_lemma_with_forbidden_terminals}), so $S_v^i\cap A=\emptyset$ and $S_v^i\subseteq D_v$.
In either case, $S_v^i\subseteq D_v$, 
so when $S_v^i$ is contracted into $x_v$, 
the cut $C\subseteq V - D_v$ is not affected. 
Thus, $C$ is still an element cut between $p$ and $q$ 
in $G_{\LG}$ with terminal set $U_{\LG}\cup F_{\LG}$,
and 
$\kappa'_{G_{\LG}, U_{\LG}\cup F_{\LG}}(p, q) \leq\abs{C} = \kappa'_{G, U\cup F}(p, q)$.
\end{proof}

\begin{lemma}
\label{lem: elem_conn_preserved_for_small_parts}
For any two vertices $p, q\in U_{v}$ where $v\in R_{\SM}^i$, we have
\[\kappa'_{G, U\cup F}(p, q) \leq \kappa'_{G_v, U_v\cup F_v}(p, q) \leq (1+\epsilon)\kappa'_{G, U\cup F}(p, q).\]
\end{lemma}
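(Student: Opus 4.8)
The plan is to prove both inequalities. The lower bound $\kappa'_{G_v, U_v\cup F_v}(p, q) \geq \kappa'_{G, U\cup F}(p, q)$ should be the easy direction: any minimum element cut $C$ separating $p$ and $q$ in $G$ with terminal set $U\cup F$ can be ``projected'' into $G_v$. Since $G_v$ is obtained by contracting $V\setminus(S_v^i\cup\partial S_v^i)$ into the single vertex $x_v$, and $p,q\in U_v=S_v^i\cap U$ lie strictly inside $S_v^i$, the cut $C$ restricted to $S_v^i\cup\partial S_v^i$ still separates $p$ from $q$ in $G_v$; the new vertex $x_v$ sits on $q$'s side (or we argue it consistently lies outside the $p$-side), and because $x_v\in F_v$ it is a legal terminal to charge. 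Thus $\kappa'_{G_v,U_v\cup F_v}(p,q) \le |C\cap(S_v^i\cup\partial S_v^i)| \le |C| = \kappa'_{G,U\cup F}(p,q)$ — wait, this gives the \emph{upper} bound on $\kappa'_{G_v}$, so in fact this direction shows $\kappa'_{G_v, U_v\cup F_v}(p,q) \le \kappa'_{G,U\cup F}(p,q)$, hence the interesting content is that the ratio is at least $1$, which must come from the reverse consideration. Let me restructure: the contraction can only \emph{decrease} or preserve connectivity for $p,q$ inside, giving $\kappa'_{G_v,U_v\cup F_v}(p,q)\le \kappa'_{G,U\cup F}(p,q)+$ something — so actually the ratio $\ge 1$ claim needs care, and presumably the correct reading is that contracting the \emph{outside} of $S_v^i$ does not help separate $p,q$ because any cut in $G_v$ lifts to a cut in $G$ of the same size; I would establish $\kappa'_{G_v,U_v\cup F_v}(p,q)\ge\kappa'_{G,U\cup F}(p,q)$ by taking a minimum cut $C'$ in $G_v$, observing $x_v\notin C'$ or can be replaced, and lifting $C'$ to a cut in $G$ separating $p,q$ with terminal set $U\cup F$.

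For the upper bound $\kappa'_{G_v, U_v\cup F_v}(p, q) \leq (1+\epsilon)\,\kappa'_{G, U\cup F}(p, q)$, I would take a minimum element cut $C$ for $p,q$ in $G$ with terminal set $U\cup F$, with sides $A\ni p$ and $B\ni q$, and aim to convert it into a cut of $G_v$ that is not much larger. The key geometric fact to prove is that $S_v^i$ essentially ``respects'' this cut: by an argument parallel to the one in Lemma~\ref{lem:elem_conn_preserved} (submodularity of element connectivity, combined with the minimality of $S_v^i$ in size among minimum isolating cuts from Lemma~\ref{lem:isolating_cut_lemma_with_forbidden_terminals}), one shows that either $C$ already lies inside $S_v^i\cup\partial S_v^i$, or we can replace the portion of $C$ outside $S_v^i\cup\partial S_v^i$ by (part of) $\partial S_v^i$, at the cost of at most an additive $|\partial S_v^i|$, which is at most $(1+\epsilon)\lambda \le (1+\epsilon)\kappa'_{G,U\cup F}(p,q)$ since $\lambda$ is the global minimum element connectivity and hence a lower bound on $\kappa'_{G,U\cup F}(p,q)$. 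Actually, to get a multiplicative $(1+\epsilon)$ rather than $1+$(multiplicative) factor, I expect the intended argument is: replace $C$ by $(C\cap(S_v^i\cup\partial S_v^i))\cup(\partial S_v^i\setminus A)$, or a similar uncrossing, and then bound the size of the new separator in $G_v$ by $|C| + (\text{extra from }\partial S_v^i)$ where the extra is controlled because $S_v^i$ was chosen with $|\partial S_v^i|\le(1+\epsilon)\lambda$.

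Concretely, the steps I would carry out are: (1) fix $C$, $A\ni p$, $B\ni q$ minimum in $G$ w.r.t.\ $U\cup F$; (2) consider $H_p = A\cap(S_v^i\cup\partial S_v^i)$ (the part of $p$'s side that survives contraction) and show $\partial H_p$ — computed in $G_v$, where everything outside $S_v^i\cup\partial S_v^i$ is the single vertex $x_v\in F_v$ — is an element cut separating $p$ from $q$ in $G_v$ with terminal set $U_v\cup F_v$; (3) bound $|\partial H_p|$: vertices of $\partial H_p$ are either in $C$ (charged to $|C|$) or in $\partial S_v^i$ (charged to $|\partial S_v^i|\le(1+\epsilon)\lambda$), using submodularity to ensure we do not double-count and to get the tightest bound; (4) conclude $\kappa'_{G_v,U_v\cup F_v}(p,q)\le |\partial H_p| \le |C| + \epsilon\lambda \le (1+\epsilon)|C|$, using $\lambda\le\kappa'_{G,U\cup F}(p,q)=|C|$. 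The main obstacle I anticipate is step (3): making the uncrossing argument give a clean \emph{multiplicative} $(1+\epsilon)$ bound rather than an additive $\lambda$ overhead that could in principle exceed $\epsilon|C|$; this is exactly where one must use that $|C|\ge\lambda$ so that the additive $O(\epsilon\lambda)$ slack is absorbed, and one must be careful that the replacement separator's excess over $|C|$ is bounded by $|\partial S_v^i|-|\partial(\text{something})|$ via submodularity rather than by the full $|\partial S_v^i|$. The combination over the $\log|U|$ recursion levels (yielding the $(1+\epsilon)^{\log|U|}$ discussed before the algorithm, re-scaled by choosing $\epsilon$ appropriately) is then routine and handled outside this lemma.
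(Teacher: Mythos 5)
Your final reading of the easy direction (take a minimum element cut of $G_v$, note that $x_v\in F_v$ is a terminal and hence cannot appear in it, and lift it back to $G$) is exactly the paper's argument, and your plan for the hard direction --- uncross the side $A\ni p$ of a minimum cut with $S_v^i$, and control the boundary of the intersection via submodularity together with $|\partial S_v^i|\le(1+\epsilon)\lambda$ and $\lambda\le\kappa'_{G,U\cup F}(p,q)$ --- is the right skeleton. But the step you yourself flag as the ``main obstacle'' is a genuine gap, and it is precisely where the paper uses an ingredient you never invoke: the pivot vertex $s$. Submodularity gives $|\partial(S_v^i\cap A)|\le|\partial A|+|\partial S_v^i|-|\partial(S_v^i\cup A)|$, so to reduce the additive overhead from $|\partial S_v^i|\le(1+\epsilon)\lambda$ down to $\epsilon\lambda$ you must certify $|\partial(S_v^i\cup A)|\ge\lambda$. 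That requires exhibiting a terminal that lies outside $S_v^i\cup A$; nothing in your write-up provides one, and without it the lower bound can fail (a priori $S_v^i\cup A$ could swallow almost the entire graph and have a boundary much smaller than $\lambda$), leaving only the useless bound $\kappa'_{G_v,U_v\cup F_v}(p,q)\le(2+\epsilon)\,\kappa'_{G,U\cup F}(p,q)$.

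The paper closes this using the structure of \textsc{CutThresholdStep}: the pivot $s$ is kept in every sample $R^j$ and excluded from $R_{\SM}^j$, so $v\ne s$ and $S_v^i\cap R^i=\{v\}$ forces $s\notin S_v^i$; and since the lemma is symmetric in $p,q$ one may relabel so that the side $A$ of $p$ does not contain $s$. Then $\partial(S_v^i\cup A)$ separates the two terminals $p$ and $s$, hence $|\partial(S_v^i\cup A)|\ge\lambda$, and the submodular inequality yields $|\partial(S_v^i\cap A)|\le|\partial A|+\epsilon\lambda\le(1+\epsilon)|\partial A|$. Since $S_v^i\cap A$ contains $p$, avoids $q$, and its boundary lies inside $G_v$ (so it remains an element cut there with terminal set $U_v\cup F_v$), this is the desired witness cut. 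Two smaller remarks: the paper uncrosses with $S_v^i\cap A$ rather than your $H_p=A\cap(S_v^i\cup\partial S_v^i)$, which makes ``the cut survives the contraction'' immediate; and the correct accounting is exactly this uncrossing identity, not a vertex-by-vertex charge of $\partial H_p$ to $C$ or $\partial S_v^i$, which is what produced your additive-$\lambda$ worry in the first place. So the proposal has the right strategy but is missing the one idea --- $s$ as a guaranteed terminal outside $S_v^i\cup A$ --- that makes the multiplicative $(1+\epsilon)$ bound go through.
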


\begin{proof}
As in the proof of \cref{lem:elem_conn_preserved},
all cuts in $G_v$ w.r.t.~$U_v\cup F_v$ are also valid
cuts in $G$ w.r.t.~$U\cup F$, hence
$\kappa'_{G, U\cup F}(p, q) \leq \kappa'_{G_v, U_v\cup F_v}(p, q)$.

It remains to prove that
$\kappa'_{G_v, U_v\cup F_v}(p, q)\leq (1+\epsilon)\kappa'_{G, U\cup F}(p, q)$.
Suppose $C$ is a minimum element cut disconnecting 
$p$ and $q$ in $G$ with terminal set $U\cup F$ and let $A, B$ be the sides of $p, q$.
Without loss of generality, suppose $A$ 
does not contain $s$. Because $s\in U$, $C$ does not contain $s$.
All element cuts defined below are w.r.t. $G$ and $U\cup F$.

Because $p\in A\subseteq S_v^i\cup A$, $s\notin A$, $s\notin \partial A = C$, and by \cref{lem:isolating_cut_lemma_with_forbidden_terminals}, $s\notin S_v^i$ and $s\notin \partial S_v^i$, we have that $\partial (S_v^i\cup A)$ is \emph{some}
element cut disconnecting $p$ and $s$. Therefore, $\abs{\partial (S_v^i\cup A)} \geq \lambda$. 
Furthermore, by the definition of $S_v^i$, $\abs{\partial S_v^i}\leq (1+\epsilon)\lambda$. Therefore, by \cref{lem:submodularity-of-partial},
\begin{align}
    (1+\epsilon) \lambda + \abs{\partial A}
    \geq \abs{\partial S_v^i} + \abs{\partial A}
    &\geq \abs{\partial (S_v^i\cup A)} + \abs{\partial (S_v^i\cap A)}
    \geq \lambda + \abs{\partial (S_v^i\cap A)}.\label{eqn:approx-iso-cut-1}
\intertext{Since $S_v^i\cap A$ contains $p$ but not $q$, $\partial(S_v^i\cap A)$ is an element cut disconnecting $p$ and $q$, and since it is contained in $G_v$, it is still an element cut in $G_v$ with terminal set $U_v\cup F_v$, so}
    \abs{\partial (S_v^i\cap A)} &\geq \kappa'_{G_v, U_v\cup F_v}(p, q).\label{eqn:approx-iso-cut-2}
\intertext{By definition
of $A,B,$ and $\lambda$,}
\abs{\partial A} &= \kappa'_{G, U\cup F}(p, q)\geq \lambda.\label{eqn:approx-iso-cut-3}
\intertext{Therefore, by (\ref{eqn:approx-iso-cut-1})--(\ref{eqn:approx-iso-cut-3}),}
\kappa'_{G_v, U_v\cup F_v}(p, q) &\leq \epsilon \lambda + \abs{\partial A} \leq (1+\epsilon)\kappa'_{G, U\cup F}(p, q).\nonumber
\end{align}
\end{proof}

\ignore{
\begin{lemma}\label{lem:f_valid}
The assignment ${\algoF}(v)=\bot$ occurs 
if and only if $v$ appears in $C(e)$ for some edge. 
Therefore, ${\algoF}$ value of vertices in $U\cup F$ never equals $\bot$.
\end{lemma}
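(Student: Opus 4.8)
The plan is to prove the biconditional by analyzing how the value ${\algoF}(v)$ is assigned, following the recursion in \cref{algo:construct_gh_tree}. First I would unwind the definition: ${\algoF}(v)$ is set to $\bot$ in the final \textbf{Combine} step precisely when $v$'s value is defined in at least two of the functions ${\algoF}_{\LG}, {\algoF}_{v_1},\ldots$ (where the $v_j$ range over $R_{\SM}^i$). So the real task is to characterize, for each recursive subproblem, the set of vertices $v$ for which ${\algoF}(v)$ ends up defined (i.e.\ is non-$\bot$) in that subproblem's output. I would prove by induction on the recursion depth the following invariant: in the output $(T,{\algoF},C)$ of a call with inputs $(G,U,F,\epsilon)$, the vertex set of $G$ partitions into those $v$ with ${\algoF}(v)\in V_T$ and those with ${\algoF}(v)=\bot$, and the latter is exactly the set of vertices appearing in $C(e)$ over all $e\in E_T$. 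The base case ($\abs U=1$) is immediate: every vertex of $V(G)$ gets ${\algoF}(u)=t$, $C$ is empty, and no vertex is ever $\bot$.

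For the inductive step, consider a non-base call. A vertex $v\in V(G)$ lands in at most the following places among the subproblems: in exactly one $G_{v_j}$ if $v\in S_{v_j}^i$ (note the $S_{v_j}^i$ for distinct $j$ are disjoint, and each $G_{v_j}$ keeps the vertices of $S_{v_j}^i$ while contracting the rest into $x_{v_j}$), and in $G_{\LG}$ if $v\notin\bigcup_j S_{v_j}^i$ (since $G_{\LG}$ contracts each $S_{v_j}^i$ into $y_{v_j}$ but keeps all other vertices). The contracted super-nodes $x_{v_j}$ and $y_{v_j}$ are the \emph{only} vertices appearing in two subproblems simultaneously — $x_{v_j}\in V(G_{v_j})$ while its "preimage" $S_{v_j}^i$ sits inside $G_{\LG}$ as $y_{v_j}$ — but these are new vertices, not vertices of $G$. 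So for an original vertex $v\in V(G)$, the value ${\algoF}(v)$ after \textbf{Combine} is $\bot$ iff (a) $v$'s value was already $\bot$ in the unique subproblem containing it, or (b) the newly-added cut edges $C((\,{\algoF}_{v_j}(x_{v_j}),{\algoF}_{\LG}(y_{v_j})\,))=\partial S_{v_j}^i$ contain $v$ for some $j$ — but wait, I must also account for the original $C_{\LG}$ and $C_{v_j}$ cut-edges, which by the induction hypothesis have their vertex-sets coinciding with the $\bot$-vertices of those subproblems. I would combine (a), the induction hypothesis, and the new-edge case (b) to conclude that ${\algoF}(v)=\bot$ in the output iff $v$ appears in some $C(e)$: the $\bot$ from (a) corresponds to $v$ appearing in an inherited $C_{\LG}$- or $C_{v_j}$-edge, and case (b) corresponds to $v$ appearing in one of the new $\partial S_{v_j}^i$ edges. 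One subtlety to check here: a vertex $v$ could appear in $\partial S_{v_j}^i$ while \emph{also} lying inside $G_{\LG}$ (since $\partial S_{v_j}^i$ is the boundary, not contracted into $y_{v_j}$); I need to verify this doesn't create a spurious double-definition across subproblems — it does not, because $v\in\partial S_{v_j}^i$ means $v\notin S_{v_j}^i$, so $v$ is not in $V(G_{v_j})$, hence $v$ appears only in $G_{\LG}$, and the $\bot$ comes solely from the new edge.

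The second sentence of the lemma then follows: for $v\in U\cup F$, I must show ${\algoF}(v)\neq\bot$, i.e.\ $v$ never appears in any $C(e)$. The cuts $C(e)$ are element cuts with respect to the \emph{augmented} terminal set $U\cup F$ (tracked correctly through the recursion by the forbidden-set mechanism — each recursive call adds the relevant contracted node to $F$, and \cref{lem:isolating_cut_lemma_with_forbidden_terminals} guarantees $(\partial S_v)\cap F=\emptyset$), so by definition of element connectivity no $C(e)$ contains a terminal from $U$ or a forbidden vertex from $F$. For the newly created edges this is exactly the "$(\partial S_{v}^i)\cap F=\emptyset$" and "$S_v^i\cap I=\{v\}$" guarantees of \cref{lem:isolating_cut_lemma_with_forbidden_terminals} applied with $I=R^i$ and forbidden set $(U\setminus R^i)\cup F_{in}$, which together ensure $\partial S_v^i$ avoids all of $U$ and all of $F$; for the inherited edges it follows from the induction hypothesis applied to the subproblems, whose augmented terminal sets $U_v\cup F_v$ and $U_{\LG}\cup F_{\LG}$ contain all of the original $U$ and $F$. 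The main obstacle I anticipate is the bookkeeping in the inductive step — precisely tracking which vertices of $G$ appear in which subproblem, distinguishing original vertices from the contracted super-nodes $x_{v_j}, y_{v_j}$, and handling the boundary vertices $\partial S_{v_j}^i$ that straddle the large/small split — rather than any deep combinatorial difficulty.
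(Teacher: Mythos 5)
There is a genuine error at the heart of your argument: you have misread how $G_v$ is constructed. The algorithm forms $G_{v}$ by contracting $V\setminus (S_{v}^i\cup \partial S_v^i)$ into $x_v$, so $G_v$ \emph{retains} the boundary vertices $\partial S_v^i$, and $G_{\LG}$ retains them as well (only $S_v^i$ is contracted into $y_v$). Hence every $w\in\partial S_v^i$ lies in both $V(G_v)$ and $V(G_{\LG})$ and is therefore defined in both ${\algoF}_v$ and ${\algoF}_{\LG}$; the Combine step's double-definition rule is exactly what sets ${\algoF}(w)\gets\bot$, and this is the entire content of the paper's (one-line) proof: the doubly-defined vertices are precisely the $\partial S_v^i = C(({\algoF}_v(x_v),{\algoF}_{\LG}(y_v)))$, and conversely nothing else is doubly defined. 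Your ``subtlety'' paragraph resolves this point backwards: you assert that $w\in\partial S_{v_j}^i$ implies $w\notin V(G_{v_j})$, ``hence $v$ appears only in $G_{\LG}$, and the $\bot$ comes solely from the new edge.'' Under that reading your case (b) has no mechanism at all --- the algorithm never assigns $\bot$ because a vertex belongs to a newly created cut set $C(e)$; $\bot$ arises \emph{only} from multiple definition. So with your description of $G_v$, boundary vertices would simply inherit a non-$\bot$ value from ${\algoF}_{\LG}$ and the ``if'' direction of the lemma would be unproved (indeed false for the algorithm you describe). The surrounding scaffolding (induction over the recursion, inherited cut edges matching inherited $\bot$'s) is fine and is essentially an expanded version of the paper's argument, but the key step must be corrected to: $\partial S_v^i\subseteq V(G_v)\cap V(G_{\LG})$, so these vertices are doubly defined and set to $\bot$, and they are exactly the cut sets attached to the new tree edges.

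A secondary remark on the last sentence of the lemma: your justification that $\partial S_v^i$ avoids all of $U$ leans on ``$S_v\cap I=\{v\}$,'' but that condition only keeps the other terminals of $I=R^i$ out of the \emph{interior} $S_v^i$, not out of the boundary $\partial S_v^i$; \cref{lem:isolating_cut_lemma_with_forbidden_terminals} forbids only $(U\setminus R^i)\cup F$ from the cut. The paper itself treats this point tersely, but as written your argument overstates what the isolating-cut guarantee gives you.
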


\begin{proof}
From the construction of $G_{\LG}$ and $G_v$, it can be seen that only the vertices in $\partial S_v^i=C(({\algoF}_v(x_v),{\algoF}_{\LG}(y_v)))$ 
are defined twice (or more), 
so their ${\algoF}$-value are set to $\bot$. 
These are all the vertices such that ${\algoF}(v)=\bot$.
\end{proof}

\begin{remark}
From \cref{lem:elem_conn_preserved}, \cref{lem: elem_conn_preserved_for_small_parts} and \cref{lem:f_valid}, in line 17 of \cref{algo:construct_gh_tree}, the ${\algoF}$-value of $x_v$, $y_v$ are not $\bot$, so linking the sub-trees would be successful. And $\algoF(v)$ for $v\in U$ equals to $\bot$.
\end{remark}
}

\subsection{Running Time Analysis}\label{subsubsec:run_time_analysis}

The running time analysis in this section closely follows Li and Panigrahi~\cite{Li2021Approximate}.
We continue to use all notation defined in \cref{algo:construct_gh_tree} and \cref{algo:cut_threshold_step}
such as $R^i, R_{\SM}^i, \{S_v^i\},G_{\LG},U_{\LG},F_{\LG}$.

\begin{lemma}
\label{lem:pair_below_k_decrease}
Define $P \subset U^2$ with respect to a threshold $W$ to be
\begin{align*}
P  
&=\{(u, v): \kappa'_{G, U\cup F}(u, v)\leq W, \text{and $|A\cap U|\leq |U|/2$ where $A$ is}\\
& \hspace{2cm}
\text{the side of the minimum $u$-$v$ element cut containing $u$}\}.
\intertext{Similarly define $P_{\LG}$ w.r.t. 
$G_{\LG}, U_{\LG}, F_{\LG}$ and $W$. Then}
\E&(\abs{P_{\LG}}) \leq \left(1-\Omega\left(\frac{1}{\log^2\abs{U}}\right)\right)\abs{P}.
\end{align*}
\end{lemma}

\begin{proof}
We need to lower bound the size of 
$Q=P - P_{\LG}$.
Define $U_{\SM}^i=\bigcup_{v\in R_{\SM}^i}(S_v^i\cap U)$, $U_{\SM}=\bigcup_{i=0}^{\lowerint{\log\abs{U}}}U_{\SM}^i$, and
$U^*= \{u \mid (u, s)\in P\}$.
By definition we know $U_{\SM}^i\subseteq U^*$.
We will show that $\E(\abs{Q})\geq \Omega(1/\log^2 |U|)\abs{P}$ follows from 
the following three claims.
\begin{itemize}
    \item [(1)]For each $u\in U^*$, there are at least $\abs{U}/2$ vertices $v$ such that $(u, v)\in P$.
    \item [(2)] $\E(\abs{U_{\SM}})\geq \Omega(\abs{U^*}/\log\abs{U})$.
    \item [(3)] For each pair $(u, v)$, 
    $\Pr(u\in U^* \mid (u,v)\in P) \geq 1/2$.
\end{itemize}

For Claim (1), consider the minimum element cut disconnecting $s$ and $u\in U^*$. There are at least $\abs{U}/2$ terminals $v$ 
not in the same side as $u$, with $(u, v)\in P$.\footnote{These $v$ are not required to all be in $U_{\LG}$.} 
Claim (2) is proved in \cref{lem:claim_cut_threshold}.
Claim (3) holds because $s\in U$ is chosen
uniformly at random and there are at least $\abs{U}-\abs{U}/2=\abs{U}/2$ terminals not in the side of $u$. 
When $s$ is such a terminal, $u\in U^*$. 

The algorithm fixes 
$i\in\{0, 1, \cdots, \lowerint{\log\abs{U}}\}$ 
that maximizes $\abs{U_{\SM}^i}$. 
Putting it all together,

\begin{align*}
    \E(\abs{Q}) &\geq \frac{\abs{U}}{2}\E(\abs{U_{\SM}^i})
     \geq \frac{\abs{U}}{2(\log \abs{U} + 1)}\E(\abs{U_{\SM}})
     \geq \frac{\abs{U}}{2(\log \abs{U} + 1)} \Omega\left(\frac{\abs{U^*}}{\log \abs{U}}\right)
     \geq \Omega\left(\frac{1}{\log^2\abs{U}}\right)\abs{P}.
\end{align*}
For the first inequality,
Claim (1) implies that each
$u\in U^*$ is involved
in $\abs{U}/2$ pairs in $P$,
all of which do not appear in 
$P_{\LG}$ whenever $u\in U_{\SM}^i$.
The second inequality 
follows from the choice of $i$.
The third inequality follows
from Claim (2).
The fourth inequality follows from Claim (3) and the following bound on the size of $P$. 
\[
\abs{P}/2\leq \E(\abs{\{(u, v)\in P: u\in U^*\}})\leq \abs{U}\cdot \abs{U^*}.
\] 
Note each $u$ appears in at most 
$\abs{U}$ pairs of $P$.
\end{proof}

To complete the gap in the proof of \cref{lem:pair_below_k_decrease} we must establish the validity of Claim (2).

\begin{lemma}[Claim (2) restated]\label{lem:claim_cut_threshold}
$\E(|U_{\SM}|)=\Omega(\abs{U^*}/\log\abs{U})$.
\end{lemma}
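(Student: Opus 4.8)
The plan is to follow the analysis of the isolating-cut procedure in Li and Panigrahi~\cite{Li2021Approximate}, adapting it to element connectivity and to the forbidden-terminal version of the isolating cuts lemma (Lemma~\ref{lem:isolating_cut_lemma_with_forbidden_terminals}). Fix a vertex $u\in U^*$. By definition there is a minimum element cut separating $s$ and $u$ whose $u$-side $A_u$ satisfies $|A_u\cap U|\le |U|/2$ (and $\kappa'_{G,U\cup F}(u,s)\le W$). The intuition is that $u$ will land in $U_{\SM}^j$ in round $j$ precisely when $u$ survives the subsampling into $R^j$ while \emph{all} the other terminals in $A_u$ that would ``block'' $u$ from having its own isolating cut have been removed. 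So the natural target is: for a fixed $u\in U^*$, with probability $\Omega(1/\log|U|)$ there is some round $j$ in which $u\in R_{\SM}^j$.

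First I would set up the key deterministic fact: if $u\in R^j$, $s\in R^j$ is the only other relevant ``anchor'', and every terminal of $A_u\setminus\{u\}$ has been sampled out of $R^j$ (i.e.\ $R^j\cap A_u=\{u\}$), then the isolating cut $S_u^j$ returned by Lemma~\ref{lem:isolating_cut_lemma_with_forbidden_terminals} for $I=R^j$, $F=(U\setminus R^j)\cup F_{in}$ satisfies $|\partial S_u^j|\le |\partial A_u|\le W$ and $|S_u^j\cap U|\le |A_u\cap U|\le |U|/2$; hence $u\in R_{\SM}^j$. The first inequality holds because $A_u$ itself is a feasible isolating set for $u$ in that instance (its boundary avoids the forbidden set, since the cut is an element cut in $G$ w.r.t.\ $U\cup F$ and hence avoids $F_{in}$, and $A_u\cap R^j=\{u\}$), and the isolating-cut lemma returns a \emph{minimum} such cut, minimizing $|S_u^j|$ as a tie-break so $S_u^j\subseteq A_u$ up to the usual argument — actually I only need $|\partial S_u^j|\le|\partial A_u|$ and $|S_u^j\cap U|\le|A_u\cap U|$, the latter via the minimality/submodularity argument exactly as in Lemma~\ref{lem:elem_conn_preserved}.

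Next I would do the probabilistic part. Let $a=|A_u\cap U|$; note $1\le a\le |U|/2$. Since $s\notin A_u$, in round $j$ the event ``$u\in R^j$ and $R^j\cap A_u=\{u\}$'' has probability $(1/2)^j\cdot(1-2^{-j})^{a-1}$ (using that $s$ is kept with probability $1$ and is outside $A_u$, so it never interferes). Choosing $j^\star=\lceil\log_2 a\rceil$ makes this $\Theta(2^{-j^\star})=\Theta(1/a)$. Summing a geometric-type bound, or just using the single round $j^\star$, gives that $u$ appears in $\bigcup_j R_{\SM}^j$ — and hence contributes its set $S_u^{j^\star}\ni u$ with $S_u^{j^\star}\cap U\ni u$ to $U_{\SM}$ — with probability $\Omega(1/a)\ge \Omega(1/|U|)$; a more careful accounting over $j=\lceil\log_2 a\rceil,\dots$ pushes this to $\Omega(1/\log|U|)$ when one notes that the relevant events across rounds are ``nested'' enough (this is exactly the union/chaining step from~\cite{Li2021Approximate}). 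Therefore $\E(|U_{\SM}|)\ge \sum_{u\in U^*}\Pr[u\in U_{\SM}] = \Omega(|U^*|/\log|U|)$, using linearity of expectation and the fact that $u\in S_u^{j}\cap U$ so distinct $u$'s are counted (they need not be in distinct sets, but each contributes at least itself to $U_{\SM}$).

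The main obstacle I anticipate is the step asserting that a single good round already gives probability $\Omega(1/\log|U|)$ rather than only $\Omega(1/a)$: this requires the chaining argument over the $\lceil\log_2 a\rceil$ rounds around $j^\star$, showing the per-round success events are sufficiently independent/monotone that their union has probability $\Omega(1/\log|U|)$, and simultaneously checking that when $u\in R_{\SM}^j$ with $u$'s isolating set $S_u^j$, we have $u\in S_u^j\cap U\subseteq U_{\SM}^j$ so the contribution to $|U_{\SM}|$ is genuine. The element-connectivity-specific wrinkle — that boundaries $\partial S$ consist only of vertices and avoid the forbidden set $F_{in}$ by construction of the flow network in Lemma~\ref{lem:isolating_cut_lemma_with_forbidden_terminals} — needs to be threaded through the feasibility claim for $A_u$, but this is routine given the preliminaries.
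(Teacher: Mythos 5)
There is a genuine gap at the crux of your argument: the claim that ``a more careful accounting over $j$'' or a ``union/chaining step'' boosts the per-vertex success probability from $\Omega(1/a)$ to $\Omega(1/\log\abs{U})$ is false. For a fixed $u\in U^*$ with $a=\abs{A_u\cap U}$, the event you use (``$u\in R^j$ and $R^j\cap A_u=\{u\}$'') has probability $2^{-j}(1-2^{-j})^{a-1}$, and summing this over all rounds $j$ gives $\Theta(1/a)$, not $\Omega(1/\log\abs{U})$: the terms are exponentially small for $2^{-j}\gg 1/a$ and decay geometrically for $2^{-j}\ll 1/a$, so no chaining over the $O(\log\abs{U})$ rounds can help. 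Since $a$ can be as large as $\abs{U}/2$, counting only ``$u$ contributes at least itself to $U_{\SM}$'' yields merely $\E(\abs{U_{\SM}})\geq\sum_{u\in U^*}\Theta(1/a_u)$, which can be as weak as $\Theta(\abs{U^*}/\abs{U})$. The step you flag as the main obstacle is in fact where the proof breaks.

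The missing idea (and what the paper, following Li--Panigrahi, actually does) is an amplification by \emph{output size} rather than by probability: root a Gomory--Hu tree for element connectivity at $s$, and for $v\in U^*$ let $U_{r(v)}$ be the terminals below the bottleneck edge $e(v)$ and $i(v)=\lowerint{\log\abs{U_{r(v)}}}$. The event that $v$ is the \emph{unique} vertex of $U_{r(v)}$ surviving in $R^{i(v)}$ has probability only $\Theta(2^{-i(v)})$, but when it occurs, \emph{all} $\abs{U_{r(v)}}=\Omega(2^{i(v)})$ terminals of $U_{r(v)}$ land in $U_{\SM}$ (recall $U_{\SM}$ is the union of the terminal sides $S_v^i\cap U$, not just the centers), so each $v\in U^*$ contributes $\Omega(1)$ to the expected count of ``hit'' pairs. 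The $1/\log\abs{U}$ then enters only through an overcounting bound: a nesting argument on the subtrees shows two distinct hitters of the same vertex must have different scales $i(\cdot)$, so each vertex is hit at most $O(\log\abs{U})$ times, giving $\E(\abs{U_{\SM}})=\Omega(\abs{U^*}/\log\abs{U})$. Your deterministic feasibility step (that $A_u$ is a valid isolating candidate whose boundary avoids $F_{in}$, so $\abs{\partial S_u^j}\leq W$ and $\abs{S_u^j\cap U}\leq\abs{U}/2$) is fine and is indeed needed, but without crediting the whole side of the successful cut and bounding the per-scale overcount, the stated bound does not follow.
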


\begin{proof}
Let us condition arbitrarily on the identity of $s$, and let $T$ be the element connectivity Gomory-Hu tree for terminal set $U$ 
rooted at $s$.
For each vertex $v\in U$, let $U_v$ be the set of terminals in the subtree rooted at $v$.
For a terminal $v\in U$, we find the edge $e(v)$ along the path from $s$ to $v$ with minimum weight, and when not unique, the one with maximum depth. Let $r(v)$ be the deeper endpoint of $e(v)$.
By definition, a terminal $v$ appears in $U^*$ if and only if $w(e(v))\leq W$ and $\abs{U_{r(v)}}\leq \abs{U}/2$, i.e., once we condition on $s$, $U^*$ is fixed and no longer random.

We say that a vertex $v\in U^*$ is \emph{active} if $v\in R^{i(v)}$ where $i(v)=\lowerint{\log\abs{U_{r(v)}}}$. In addition, if $U_{r(v)}\cap R^{i(v)}=\{v\}$, then we say that $v$ \emph{hits} all of the vertices in $U_{r(v)}$, including itself. For completeness, we define vertices in $U- U^*$ to be inactive; they do not hit any vertices.
Now we show that

\begin{itemize}
    \item [(a)] each vertex that is hit is in $U_{\SM}$;
    \item [(b)] the total number of pairs $(u, v)$ for which $v\in U^*$ hits $u$ is $\Omega(\abs{U^*})$ in expectation;
    \item [(c)] each vertex $u$ is hit by at most $O(\log \abs{U})$ vertices in $v\in U^*$.
\end{itemize}

To prove statement (a), suppose $v$ hits some vertices. Then by definition, $U_{r(v)}\cap R^{i(v)} = \{v\}$.  The isolating cut for $v$ returned by \cref{lem:isolating_cut_lemma_with_forbidden_terminals} corresponds
to the edge joining $r(v)$ to its parent, 
so all vertices in 
$U_{r(v)}$ are on $v$'s side of the cut.
Furthermore, since $v\in U^*$, $w(e(v))\leq W$,
and $\abs{S_v^{i(v)}\cap U}=\abs{U_{r(v)}}\leq \abs{U}/2$, it follows that $v\in R_{\SM}^{i(v)}$. 
Hence all vertices hit by $v$ (namely $U_{r(v)}$) appear in $U_{\SM}$.

Turning to statement (b), 
the probability that 
$R^{i(v)}\cap U_{r(v)} = \{v\}$
is $(1-2^{-i(v)})^{\abs{U_{r(v)}}-1}2^{-i(v)}=\Theta(1/2^{i(v)})$, and when this event happens, $v$ hits $\abs{U_{r(v)}}=\Omega(2^{i(v)})$ vertices, so the contribution in the expectation is $\Omega(1)$. Since each $v\in U^*$ contributes $\Omega(1)$ in expectation, their sum is $\Omega(\abs{U^*})$.

We now consider statement (c). 
We first show that for any two vertices 
$v, v'\in U^*$ that both hit $u$, it must be that
$i(v)\neq i(v')$. 
Since $u\in U_{r(v)}$ and $u\in U_{r(v')}$, without loss of generality we assume $r(v)\in U_{r(v')}$, so $U_{r(v)}\subseteq U_{r(v')}$. Recall that the $R$-sets are nested, that is, $R^0\supseteq R^1\supseteq\cdots \supseteq R^{\lowerint{\log\abs{U}}}$, so $R^{i(v)}\cap R^{i(v')} = R^{\max\{i(v), i(v')\}}$. 
Then,
\begin{align*}
    \emptyset &= \{v\}\cap \{v'\} 
    = (R^{i(v)}\cap U_{r(v)})\cap (R^{i(v')}\cap U_{r(v')})
    = R^{\max\{i(v), i(v')\}}\cap U_{r(v)},
\end{align*}
and because $R^{i(v)}\cap U_{r(v)}=\{v\}\neq \emptyset$, we infer that 
$\max\{i(v), i(v')\} = i(v') > i(v)$.
Thus, $u$ can be hit by at most 
$\log \abs{U}$ vertices.
To complete the proof, we have
\begin{align*}
\mathbb{E}[U_{\SM}] 
&\geq \mathbb{E}[\abs{\{u: u\text{ is hit}\}}]
\geq \frac{\mathbb{E}[\abs{\{(u, v): v\in U^*, u \text{ is hit by }v\}}]}{\log \abs{U}}
\geq \Omega\left(\frac{\abs{U^*}}{\log\abs{U}}\right).
\end{align*}

The first inequality is statement (a), 
the second inequality follows from statement (c), and the third inequality is from statement (b).
\end{proof}

\subsection{Proof of \cref{thm:construct_k_gh_tree}
and \cref{thm:construct_approx_gh_tree}}\label{subsec:proof_of_thm_51}

Now we are ready to prove 
\cref{thm:construct_k_gh_tree}
and
\cref{thm:construct_approx_gh_tree}.

\begin{proof}[Proof of \cref{thm:construct_approx_gh_tree}]
The recursion of \cref{algo:construct_gh_tree} makes progress in one of two ways.  In the
``non-$G_{\LG}$'' branches $\{G_v\}$, each $G_v$ contains at most half the number of terminals, so the recursive depth along these branches is at most $\log|U|$.  
Suppose we follow the ``$G_{\LG}$'' 
branch $\Theta(\log^2|U|\log n)$ 
times, yielding $G', U', F'$, and $P'$.  
By \cref{lem:pair_below_k_decrease},
with $W=(1+\epsilon)\lambda$,
\[
\E(|P'|) \leq (1-\Omega(1/\log^2 |U|))^{\Theta(\log^2|U|\log n)}|P| = n^{-\Omega(1)},
\]
meaning 
$P'$ is $\emptyset$ 
w.h.p.~and the global
minimum element cut of $G',U'\cup F'$ 
has increased to at least $(1+\epsilon)\lambda$ and 
we can update $\lambda$ accordingly.
\cref{algo:construct_gh_tree} updates $\lambda$ every 
$\Delta = \Theta(\log^3 n)$ steps in the $G_{\LG}$ branch.

This implies
the total depth of the recursion is $O(\epsilon^{-1}\log^3|U|\log n) = O(\epsilon^{-1}\log^4 n)$ w.h.p.
The total size of all graphs on each layer
of recursion is $O(m)$, hence 
by~\cref{lem:isolating_cut_lemma_with_forbidden_terminals},
the total time is 
$O(\epsilon^{-1}\log^4 n\cdot T_{\flow}(m)\log n)$.

Turning to correctness, by \cref{lem:elem_conn_preserved} 
the $G_{\LG}$-branch preserves the 
exact value of $\kappa'_{G, U\cup F}$, and all the non-$G_{\LG}$ branches $\{G_v\}$ introduce a 
$(1+\epsilon)$-factor approximation to
the element connectivity. 
Since the depth of the recursion in the non-$G_{\LG}$ branches is 
at most $\log |U|$, the tree returned
is a $(1+\epsilon)^{\log|U|}$-
approximate Gomory-Hu tree.

By picking $\epsilon = \epsilon_{0} / (2\log \abs{U})$ for $0 < \epsilon_0 < 1$, the approximation ratio is $(1+\epsilon)^{\log|U|} \leq e^{\epsilon \log\abs{U}} = e^{\epsilon_0/2} \leq 1 + \epsilon_0$. Meanwhile, expressed in terms of $\epsilon_0$, the running time is 
$O(\epsilon_0^{-1}\log^6 n\cdot T_{\flow}(m))$.
\end{proof}

\begin{proof}[Proof of \cref{thm:construct_k_gh_tree}]
The proof basically follows from the proof of \cref{thm:construct_approx_gh_tree} by setting $\epsilon_0 = 1/(k+1)$.  Cut values are integers.
Thus, in the output Gomory-Hu tree, all queries that report a cut-value in $[1,k]$ are exact, 
and we are free to contract any edge with weight $k+1$ or larger.
\end{proof}

\section{Conclusion}\label{sect:conclusion}

In this paper we proved that $\Omega(kn)$ 
bits of space are necessary for encoding vertex connectivity information 
up to $k$.  This establishes the near-optimality of several 
previous results.  
For example, Nagamochi-Ibaraki~\cite{NagamochiI92} sparsifiers 
encode all vertex connectivities up to $k$, but their space 
cannot be improved by more than a $\log n$ factor, 
\emph{even if the format of the representation 
is not constrained to be a graph}.  
It also implies that even the 
\emph{average} length of the Izsak-Nutov~\cite{IzsakN12} 
labeling scheme cannot be improved much.
We improved~\cite{IzsakN12} 
to have near-optimal query time $O(\log n)$,
independent of $k$,
and improved 
its construction time to 
nearly max-flow time.

Here we highlight a few open problems.
\begin{itemize}
    \item There is a trivial $\Omega(kn)$ space lower bound for data structures answering $\vcut$ queries.  Our data structure (and Izsak-Nutov~\cite{IzsakN12}) can be augmented to support fast $\vcut$ queries with $O(k^2n\log n)$ space.  Is this necessary?  Note that if it is, the lower bound cannot be purely information-theoretic; it must hinge on the requirement that queries be answered \emph{efficiently}.\footnote{There are two natural ways to answer $\vcut$ queries ``optimally,'' either enumerate their elements in $O(k)$ time, or return a pointer in $O(1)$ time 
    to a pre-stored list of elements.  The latter model was recently advocated by Nutov~\cite{Nutov22} and seems to be easier to characterize from a lower bound perspective.}

    \item A special case of the vertex connectivity oracle problem is 
    answering $\vconn(u,v)$ queries when $k=\kappa(G)$.  In other words, decide whether or not 
    $u,v$ are separated by a \emph{globally} minimum cut.
    Globally minimum vertex cuts have plenty of structure~\cite{PettieY21,cohen1993reinventing}, but prior results do not imply that beating $kn$ bits is possible.  We conjecture that $\tilde{O}(n)$ bits is possible.
    
    \item Is there a $(1+\epsilon)$-approximate vertex connectivity oracle with space $\tilde{O}(n/\epsilon^2)$, matching the lower bound of \cref{cor:approximation-lb}? 
    We conjecture that this is possible.

    \item Is there a max-flow-time algorithm for constructing an \emph{exact} Gomory-Hu tree for element connectivity?
\end{itemize}

\paragraph{Acknowledgment.} 
We would like to thank the three anonymous reviewers, whose comments substantially improved the quality of the presentation.

%%
%% The acknowledgments section is defined using the "acks" environment
%% (and NOT an unnumbered section). This ensures the proper
%% identification of the section in the article metadata, and the
%% consistent spelling of the heading.
% \begin{acks}
% To Robert, for the bagels and explaining CMYK and color spaces.
% \end{acks}

%%
%\bibliographystyle{alpha}
%\bibliography{references}

\newcommand{\etalchar}[1]{$^{#1}$}

%%
%% If your work has an appendix, this is the place to put it.
% \appendix
% \section{}

\end{document}